\title{Distributed Storage Systems based on\\ Equidistant Subspace Codes}
\author{Netanel Raviv \and Tuvi Etzion}
\date{\today}
\newtheorem{theorem}{Theorem}
\newtheorem{definition}{Definition}
\newtheorem{observation}{Observation}
\newtheorem{lemma}{Lemma}
\newtheorem{corollary}{Corollary}
\newtheorem{construction}{Construction}
\newtheorem{example}{Example}
\newtheorem{algorithm}{Algorithm}
\numberwithin{subcase}{case}
\newtheorem{remark}{Remark}
\newcommand{\cl}[1]{\mathcal{#1}}
\newcommand{\bC}{\mathbb{C}}
\newcommand{\bF}{\mathbb{F}}
\newcommand{\bN}{\mathbb{N}}
\newcommand{\qbin}[3]{{#1 \brack #2}_{#3}}
\newcommand{\grsmn}[3]{\cl{G}_{#1}\left(#2,#3\right)}
\newcommand{\Span}[1]{{\left\langle {#1} \right\rangle}}
\DeclareMathOperator{\rank}{rank}
\renewcommand*\env@matrix[1][*\c@MaxMatrixCols c]{%
  \hskip -\arraycolsep
  \let\@ifnextchar\new@ifnextchar
  \array{#1}}
 \DeclareRobustCommand{\nsbinom}{\genfrac[]\z@{}}
\newif\iffull
\begin{document}
\maketitle

\begin{abstract}
Distributed storage systems based on equidistant constant dimension codes are presented. These equidistant codes are based on the Pl\"{u}cker embedding, which is essential in the repair and the reconstruction algorithms. These systems posses several useful properties such as high failure resilience, minimum bandwidth, low storage, simple algebraic repair and reconstruction algorithms, good locality, and compatibility with small fields.
\end{abstract}

\footnotetext[1]{This research was supported in part by the Israeli
Science Foundation (ISF), Jerusalem, Israel, under
Grant 10/12.

The work of Netanel Raviv is part of his Ph.D.
thesis performed at the Technion.

The authors are with the Department of Computer Science, Technion,
Haifa 32000, Israel.

e-mail: {etzion,netanel}@cs.technion.ac.il .}

\section{Introduction}\label{section:introduction}
Let $q$ be a prime power and let $\bF_q$ be the field with $q$ elements. In a distributed storage system (DSS) a file $x\in \bF_q^B$ is stored in $n$ storage \textit{nodes}, $\alpha$ information symbols in each. The DSS is required to be resilient to node failures; i.e., it should be possible to retrieve the data from a lost node by contacting~$d$ other active nodes and downloading $\beta$ information symbols from each one of them, an operation which is called \textit{repair}. In addition, a \textit{data collector} (DC) should be able to rebuild the stored file $x$ by contacting $k$ active nodes, an operation which is called \textit{reconstruction}. 
If the file is coded with an ordinary error correcting code $C$ prior to being stored in the system (usually by an MDS code \cite{ExactRepairMDS,SRCforDSS,LongMDSCodes,AccessVSBandwidth,
RepairOptimalErasure,InterferenceAlignment,ErrorResilienceinDSSNatalia,
OptimalLocallyRepairableCodesViaRankMetricCodes}), then $C$ is called the \textit{outer code}, and the DSS code is called the \textit{inner code}.

A repair process that results in a new node which contains the exact same information as in the failed node is called an \textit{exact repair} \cite{ExactRepairMDS,OptimalERcodesforDSS}. A repair process which is not an exact repair is called a \textit{functional repair}. Such a repair must maintain the system's ability of repair and reconstruction. The amount of data which is required for a repair is $d\beta$, and it is called the \textit{repair bandwidth} of the code. Codes which minimize the repair bandwidth, i.e., $d\beta=\alpha$, are called Minimum Bandwidth Regenerating (MBR) Codes \cite{NetworkCodingForDistributedStorage}. Codes which minimize $\alpha$, and thus have $\alpha = \frac{B}{k}$, are called Minimum Storage Regenerating (MSR) Codes \cite{NetworkCodingForDistributedStorage}. A Self-Repairing Code (SRC) \cite{SRCforDSSAProjective} is a code satisfying: (a) repairs are possible without having to download an amount of data equivalent to the reconstruction of the original file $x$; and (b) the number of nodes required for repair depends only on how many nodes are missing and not on their identity. 

In \cite{StorageCodes} a framework for a construction of a DSS code based on subspaces is given. This framework is slightly different from the classical one. In this framework every node $v_i$ is associated with a subspace $U_i$ of a vector space $U$ called the \textit{message space}. The dimension of $U$ is $B=|x|$, where $x\in\bF_q^B$ is the file to be stored. In the ``storage phase'' a node $v_i$ receives a vector $M_i\cdot x$, where $M_i$ is a full-rank matrix whose row span is $U_i$. A set of nodes is called a \textit{reconstruction set}\footnote{\cite{StorageCodes} uses the term \textit{recovery set}. We use a different term for consistency.} if their respective subspaces span the entire message space. The file $x$ is reconstructible from a reconstruction set $\{v_i\}_{i\in I},I\subseteq[n]$, where $[\ell]\triangleq\{1,\ldots,\ell\}$, by solving a linear nonsingular equation system based on $\left\{ M_i\cdot x\right\}_{i\in I}$ and $\left\{M_i\right\}_{i\in I}$. A set $\{v_i\}_{i\in T_j},T_j\subseteq[n]$ of nodes is called a \textit{repair set for a node $v_j$} if each subspace $U_i,i\in T_j$ contains a subspace $W_{i,j}\subseteq U_i$ such that the span of the set $\{W_{i,j}~\vert~i\in T_j\}$ contains $U_j$. The lost information $M_j\cdot x$ may be retrieved by manipulating the rows in a linear system based on $\left\{ M_{i,j}\cdot x\right\}_{i\in J}$ and $\left\{M_{i,j}\right\}$, where $M_{i,j}$ is a matrix whose row span is $W_{i,j}$. This framework yields an algebraic repair and reconstruction algorithms. We will use the equidistant subspace codes from \cite{Equidistant} as the subspaces in our DSS. We note that in this new framework the matrices $M_i$ have the role of the outer code in the classic framework.

Our codes achieve the SRC property, and nearly achieve the MSR and MBR properties. Regarding the MBR property, we show that $d\beta\le \alpha+1$, and hence the MBR property is achieved up to an additive constant of 1. Regarding the MSR property, we show that if the nodes participating in the reconstruction algorithm receive some information from the DC, then it is possible to reconstruct $x$ by communicating $|x|=B$ field elements, $\frac{b}{2}$ elements from each node if $b$ is even and either $\frac{b-1}{2}$ elements or $\frac{b+1}{2}$ elements if $b$ is odd. This property may be seen as a variant of the MSR property. Without this additional assumption it is possible to reconstruct $x$ by downloading $2B$ elements from $b$ nodes. The penalty of providing these advantages is not being able to repair (resp. reconstruct) from \textit{any} set of $d$ (resp. $k$) nodes, but rather some properly chosen ones. This drawback is also apparent in some existing DSS codes \cite{SRCforDSS,SRCforDSSAProjective}.

Our code stores a file $x \in \bF_q^B$, where $B={b \choose 2}$ for some $b\in\bN$, in $n$ nodes. The user may choose any $n$ such that $b\le n \le \frac{q^b-1}{q-1}$ in correspondence with the expected number of simultaneous node failures. Each node stores $b-1$ field elements. For the purpose of repair, the user may choose one of two possible algorithms. The first one requires that the \textit{newcomer node} (newcomer, in short) will contact either $b-1$ or $b$ active nodes and download a single field element from each one. This algorithm will minimize the \textit{repair bandwidth} as possible. The second algorithm requires downloading all data from as little as two nodes, depending on the code construction. In either of the algorithms it is not possible to contact \textit{any} set of nodes, but a proper set may be easily found, and it is promised to exist as long as the number of node failures does not exceed some reasonable bound.


The presented code has several useful properties. As mentioned earlier the user may choose between a local repair (Subsection \ref{section:LocalRepair}) and a minimum bandwidth repair (Subsection \ref{section:MinimumBandwidthRepair}). In addition, it is possible to reconstruct nodes that were not previously in the system (Corollary \ref{corollary:AutonomousRepair})
; that is, once a proper set of $b$ nodes is stored in the system by the user, the system may use repairs in order to generate additional storage nodes without any outside interference. It is also possible to repair in the presence of up to $O(\sqrt{B})$ simultaneous node failures, while imposing no restriction on the field size (Example \ref{example:Justesen}). Two additional useful properties are apparent. One is the ability to efficiently reuse the system to store a file $y\ne x$, without having to initialize all nodes (Subsection \ref{section:modification}). This property follows directly from the linear nature of our code. The second is the ability to simultaneously repair multiple node failures in parallel (Subsection \ref{section:ParallelRepair}).

A brief overview of the equidistant subspace codes from \cite{Equidistant} will be given in Section \ref{section:preliminaries}. The specific properties of our code strongly depend on an assignment of different vectors as identifiers to the storage nodes. The code will first be described with respect to a general assignment in Section~\ref{section:theAlgorithms}, and specific assignments, as well as their resulting properties, will be discussed in Section~\ref{section:Vectors}. Some proofs and further explanations in this version are omitted and will appear in the full version of this~paper.


\section{Preliminaries}\label{section:preliminaries}
The Grassmannian $\grsmn{q}{n}{k}$ is the set of all $k$-subspaces of $\bF_q^n$. The size of $\grsmn{q}{n}{k}$ is given by the Gaussian coefficient $\qbin{n}{k}{q}$ (see \cite[Chapter 24]{VanLintAndWilson}). A \textit{constant dimension code} (CDC) is a subset of $\grsmn{q}{n}{k}$ with respect to the \textit{subspace metric} $d_S(U,V)=\dim U +\dim V-2\dim(U\cap V)$. A CDC is called equidistant if the distance between every two distinct codewords is some fixed constant. An equidistant CDC is also called a \textit{$t$-intersecting} code since the dimension of the intersection of any two distinct codewords is some constant $t$. Our construction uses the $1$-intersecting equidistant subspace codes from \cite{Equidistant}, whose construction and properties are hereby described.

In what follows $e_i$ denotes the $i$th unit vector. For a set $S$ of vectors, $\left<S\right>$ denotes the linear span of $S$, and for a matrix $M$, $\left<M\right>$ denotes its row linear span.

\begin{definition}\label{definition:Plucker}
(The Pl\"{u}cker embedding, see \cite[Section 4]{Equidistant}, \cite[p. 165]{ProjectiveGeometryFromFoundationsToApplications}) Given $M\in \bF_q^{t\times b}$, identify the coordinates
of $\bF_q^{b \choose t}$ with all $t$-subsets of $[b]$, and define
$\varphi(M)$ as a vector of length ${b \choose t}$ in which
\[
\left( \varphi(M) \right) _{\left\{i_1,\ldots,i_t\right\}} \triangleq \det M\left( i_1 ,\ldots,i_t\right)
\]
where $M\left( i_1 ,\ldots,i_t\right)$ is the $t\times t$ sub-matrix
of $M$ formed from columns $i_1<\ldots<i_t$. For $v,u \in \bF_q^b$ we denote by $\varphi{v\choose u}$ the result of applying $\varphi$ on the $2\times b$ matrix $v \choose u$.
\end{definition}

\begin{definition}\label{definition:codeword}\cite[Subsection 3.1]{Equidistant}
For $V\in\grsmn{q}{b}{1}$, $v\in V\setminus\{0\}$, and the index $r(v)$ of the leftmost nonzero entry of $v$, let
\begin{eqnarray*}
P_V\triangleq \Span{\left\{\varphi{v\choose e_i}\right\}_{i\in[b]\setminus\{r(v)\}}}.
\end{eqnarray*}
\end{definition}
By the properties of the determinant function, any choice of a nonzero vector $v$ from the 1-subspace $ V$ results in the same subspace, and thus $P_V$ is well-defined. Lemma~\ref{lemma:OmitAnyNonzero} which follows shows that the choice of $r(v)$ as the leftmost nonzero entry of $v$ is arbitrary, and every other nonzero entry could equally be chosen.

\begin{theorem}\label{theorem:Equidistant}\cite[Theorem 14]{Equidistant}
The following code 
\begin{eqnarray*}
\bC \triangleq \left\{ P_V~|~V\in\grsmn{q}{b}{1}\right\},
\end{eqnarray*}
$\bC\subseteq \grsmn{q}{{b \choose 2}}{b-1}$ is an equidistant 1-intersecting code of size $\qbin{b}{1}{q}$; that is, any distinct $P_U,P_V\in\bC$ satisfy ${\dim(P_U\cap P_V)=1}$. In addition, for every distinct $P_U,P_V\in \bC$, $P_U\cap P_V = \Span{\varphi{u\choose v}}$, where $U=\Span{u}$ and $V=\Span{v}$.
\end{theorem}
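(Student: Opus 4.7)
The plan is to reformulate $P_V$ as the image of the $\bF_q$-linear map $w\mapsto \varphi\binom{v}{w}$ from $\Fq^b$ into $\Fq^{\binom{b}{2}}$. Multilinearity of the determinant makes this map linear in $w$, so $\varphi\binom{v}{w}=\sum_{i} w_i\, \varphi\binom{v}{e_i}$. In particular $\varphi\binom{v}{v}=0$, which gives the relation $\sum_i v_i\,\varphi\binom{v}{e_i}=0$; since $v_{r(v)}\ne 0$, one can solve for $\varphi\binom{v}{e_{r(v)}}$ as a combination of the remaining $\varphi\binom{v}{e_i}$, so the $P_V$ of Definition~\ref{definition:codeword} coincides with the full image $\{\varphi\binom{v}{w}:w\in\Fq^b\}$.

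The kernel of $w\mapsto \varphi\binom{v}{w}$ is exactly $\Span{v}$, since $\varphi\binom{v}{w}=0$ iff the matrix $\binom{v}{w}$ fails to have rank~$2$. Rank--nullity then gives $\dim P_V=b-1$, placing $P_V$ in $\grsmn{q}{\binom{b}{2}}{b-1}$. Injectivity of the map $V\mapsto P_V$, and hence the claimed size $\qbin{b}{1}{q}$, will follow a~posteriori from the $1$-dimensional intersection, since $b-1\ge 2$ rules out $P_U=P_V$ for distinct $U,V$.

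For the intersection itself, $\varphi\binom{u}{v}$ lies in $P_U\cap P_V$ by the image parameterization above (using $\varphi\binom{u}{v}=-\varphi\binom{v}{u}$), and it is nonzero because $u$ and $v$ are linearly independent and hence $\binom{u}{v}$ has a nonvanishing $2\times 2$ minor. Conversely, any nonzero $x\in P_U\cap P_V$ can be written as $x=\varphi\binom{u}{w}=\varphi\binom{v}{w'}$ for some $w,w'\in\Fq^b$. The classical injectivity of the Pl\"ucker embedding on the Grassmannian of $2$-subspaces then forces $\Span{u,w}=\Span{v,w'}$; since $u$ and $v$ are linearly independent members of this common $2$-space, it must equal $\Span{u,v}$, so $w=\alpha u+\beta v$ for some $\alpha,\beta\in\Fq$ and therefore $\varphi\binom{u}{w}=\beta\,\varphi\binom{u}{v}$. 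Hence $x\in\Span{\varphi\binom{u}{v}}$, giving both the inclusion $P_U\cap P_V\subseteq \Span{\varphi\binom{u}{v}}$ and the $1$-dimensionality of the intersection.

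I expect the main obstacle to be a clean use of the injectivity of the Pl\"ucker embedding on the $2$-Grassmannian, which is the classical fact underlying the construction and is pointed to by the references in Definition~\ref{definition:Plucker}; invoking it directly avoids a coordinate-by-coordinate verification of which nonzero Pl\"ucker vectors coincide. Every other step reduces to alternating multilinearity of the determinant and rank--nullity, both of which are standard.
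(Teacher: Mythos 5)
Your proposal is correct, but note that this paper does not actually prove Theorem~\ref{theorem:Equidistant}: it imports it verbatim from \cite[Theorem 14]{Equidistant}, so the only in-paper material to compare against consists of the supporting facts it quotes (Lemma~\ref{lemma:linearPlucker}, Lemma~\ref{lemma:computeMissing}, Lemma~\ref{lemma:OmitAnyNonzero}). Your argument is a clean self-contained substitute built on exactly those ingredients: viewing $P_V$ as the image of the linear map $w\mapsto \varphi{v\choose w}$ packages Lemmas~\ref{lemma:computeMissing} and \ref{lemma:OmitAnyNonzero} into one statement, the kernel computation (all $2\times 2$ minors vanish iff $w\in\Span{v}$) gives $\dim P_V=b-1$ by rank--nullity, and the intersection claim follows from bilinearity together with one genuinely external fact, the injectivity of the Pl\"ucker embedding on $\grsmn{q}{b}{2}$, which forces $\Span{u,w}=\Span{v,w'}=\Span{u,v}$ and hence $P_U\cap P_V=\Span{\varphi{u\choose v}}$. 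That classical fact is standard and its use here is legitimate (it can also be proved directly by recovering the $2$-space from its Pl\"ucker vector, if one wants a fully coordinate-level argument). Two small caveats: your a~posteriori injectivity of $V\mapsto P_V$, and indeed the theorem's statement about distinct codewords, implicitly require $b\ge 3$ (for $b=2$ the ambient space is $\bF_q^1$ and the statement degenerates), and it is worth one sentence observing that the image formulation also shows $P_V$ is independent of the chosen representative $v$ of $V$, which the paper handles separately after Definition~\ref{definition:codeword}. With those remarks, the proof is complete and consistent in spirit with the construction as presented.
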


The following lemma shows that the function $\varphi$ from Definition \ref{definition:Plucker} is a bilinear form when applied on two row matrices. This fact will be prominent in our constructions.

\begin{lemma}\cite[Lemma~4]{Equidistant} \label{lemma:linearPlucker}
If $v,u\in\bF_q^b$ are nonzero vectors, and $\gamma,\delta\in\bF_q$, then
$\varphi{v\choose \gamma u+\delta w}=\gamma \cdot \varphi{v\choose u}+\delta \cdot \varphi{v\choose w}$ and $\varphi{\gamma u+\delta w\choose v}=\gamma\cdot \varphi{u\choose v}+\delta \cdot \varphi{w\choose v}$.
\end{lemma}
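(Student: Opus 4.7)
The plan is to reduce the claimed bilinearity of $\varphi$ on $2\times b$ matrices to the well-known multilinearity of the determinant in its rows. By Definition~\ref{definition:Plucker}, for every pair $\{i_1,i_2\}\sus[b]$ with $i_1<i_2$, the $\{i_1,i_2\}$-coordinate of $\varphi{a\choose c}$ equals the $2\times 2$ determinant of the submatrix formed by columns $i_1$ and $i_2$ of the matrix $a\choose c$. Consequently, proving each of the two claimed vector identities amounts to checking that they hold coordinate by coordinate over all such pairs.

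For the first identity, fix an index $\{i_1,i_2\}$ and observe that the corresponding coordinate of $\varphi{v\choose \gamma u+\delta w}$ is
$$\det\begin{pmatrix} v_{i_1} & v_{i_2} \\ \gamma u_{i_1}+\delta w_{i_1} & \gamma u_{i_2}+\delta w_{i_2}\end{pmatrix}.$$
Applying linearity of the determinant in its second row splits this into
$$\gamma\cdot\det\begin{pmatrix} v_{i_1} & v_{i_2} \\ u_{i_1} & u_{i_2}\end{pmatrix}+\delta\cdot\det\begin{pmatrix} v_{i_1} & v_{i_2} \\ w_{i_1} & w_{i_2}\end{pmatrix},$$
which is precisely the $\{i_1,i_2\}$-coordinate of $\gamma\cdot\varphi{v\choose u}+\delta\cdot\varphi{v\choose w}$. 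Since this holds for every valid index pair, the two vectors coincide. The second identity follows by an identical argument, invoking linearity of the determinant in its first row instead of its second.

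There is no genuine obstacle here; the lemma is essentially a coordinatewise restatement of the multilinearity of $2\times 2$ determinants. The only fine point worth noting is that Definition~\ref{definition:Plucker} fixes the ordering $i_1<\ldots<i_t$ of the selected columns, which ensures that both sides of each identity are computed with the same column ordering, so no sign ambiguity appears. The assumption that $v$, $u$, and $w$ are nonzero plays no role in the calculation itself; it is simply inherited from the context (Definition~\ref{definition:codeword}) in which $\varphi$ is intended to be applied.
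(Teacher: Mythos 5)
Your argument is correct: since each coordinate of $\varphi{a\choose c}$ is by Definition~\ref{definition:Plucker} a $2\times 2$ determinant of fixed columns, linearity of the determinant in each row immediately gives both identities coordinatewise, and your remark about the fixed column ordering removing any sign issue is the right fine point. The paper itself does not reprove this lemma (it cites \cite[Lemma~4]{Equidistant}), but your proof is the standard one behind that citation, so there is nothing to add.
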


Lemma \ref{lemma:computeMissing} and Lemma  \ref{lemma:OmitAnyNonzero} provide a convenient way of choosing a basis to any $P_V\in \bC$ (Theorem~\ref{theorem:Equidistant}); and both may easily be obtained from \cite[Lemma 3]{Equidistant}. For completeness we include a short proof.
\begin{lemma}\label{lemma:computeMissing}
If $v=(\gamma_1,\ldots,\gamma_b)\in\bF_q^b$ is a nonzero vector, then
\[
\sum_{j\in [b]}\gamma_j\cdot \varphi{v\choose e_j}=0.\]
\end{lemma}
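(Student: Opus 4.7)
The plan is to exploit the bilinearity of $\varphi$ established in Lemma~\ref{lemma:linearPlucker}. The idea, in one line, is that
\[
\sum_{j\in[b]}\gamma_j\cdot\varphi{v\choose e_j}\;=\;\varphi{v\choose \sum_{j\in[b]}\gamma_j e_j}\;=\;\varphi{v\choose v}\;=\;0,
\]
where the last equality holds because every $2\times 2$ sub-matrix of $v\choose v$ has two identical rows and hence determinant $0$.

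To make this rigorous I would first note that Lemma~\ref{lemma:linearPlucker} is stated for a linear combination of two vectors in the second argument; a straightforward induction on $b$ extends it to a combination of any finite number of vectors. The mild nuisance is that the lemma assumes the vectors are nonzero, so along the way one must separately handle intermediate partial sums that might vanish; this is harmless, since whenever a row of a $2\times b$ matrix is zero every $2\times 2$ sub-matrix has zero determinant, so $\varphi{v\choose 0}=0$, which is exactly the ``degenerate'' instance of the linearity identity. With this extension in hand, pulling the sum inside $\varphi$ in the second argument yields $\varphi\binom{v}{v}=0$, as above.

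I expect no real obstacle here; the main thing to get right is the bookkeeping of the induction and the zero-vector edge case. A coordinate-wise verification is also available as a safety check: the $\{i_1,i_2\}$-entry (with $i_1<i_2$) of $\varphi\binom{v}{e_j}$ equals $-\gamma_{i_2}$ if $j=i_1$, equals $\gamma_{i_1}$ if $j=i_2$, and is $0$ otherwise; therefore the $\{i_1,i_2\}$-entry of $\sum_j \gamma_j\,\varphi\binom{v}{e_j}$ is $\gamma_{i_1}(-\gamma_{i_2})+\gamma_{i_2}(\gamma_{i_1})=0$, which confirms the identity directly without invoking the extended linearity.
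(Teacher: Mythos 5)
Your proof is correct and follows the same route as the paper: apply the bilinearity of $\varphi$ from Lemma~\ref{lemma:linearPlucker} to pull the sum into the second argument and conclude $\varphi{v\choose v}=0$ from the vanishing of determinants with repeated rows. The extra care about iterating the two-term linearity and the zero partial sums (or the coordinate-wise check) only makes explicit what the paper's one-line proof leaves implicit.
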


\begin{proof}
By Lemma~\ref{lemma:linearPlucker} and by the properties of the determinant function we have
\begin{eqnarray*}
\sum_{j\in [b]}\gamma_j\cdot \varphi{v\choose e_j} & = & \varphi{v\choose \sum_{j\in [b]}\gamma_j e_j}
= \varphi{v\choose v}=0.
\end{eqnarray*}
\end{proof}

\begin{lemma}\label{lemma:OmitAnyNonzero}
If $v=\left(\gamma_1,\ldots,\gamma_b\right)\in \bF_q^b$, $\Span{v}\triangleq V$, and $\gamma_s\ne 0$ for some $s\in[b]$, then
\begin{eqnarray*}
P_V=\Span{\left\{\varphi{v \choose e_i}\right\}_{i\in[b]\setminus\{s\}}}.
\end{eqnarray*}
\end{lemma}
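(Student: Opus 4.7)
The plan is to reduce the lemma to a simple rearrangement of the unique linear dependence among the $b$ vectors $\{\varphi{v\choose e_j}\}_{j\in[b]}$ supplied by Lemma~\ref{lemma:computeMissing}. The original definition of $P_V$ takes the span of all these vectors except the one indexed by $r(v)$; the goal is to show that omitting the one indexed by any other $s$ with $\gamma_s\neq 0$ yields the same subspace.

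First I would invoke Lemma~\ref{lemma:computeMissing} to write
\[
\sum_{j\in[b]}\gamma_j\cdot\varphi{v\choose e_j}=0.
\]
Since $\gamma_s\neq 0$, I can solve this relation for $\varphi{v\choose e_s}$, obtaining
\[
\varphi{v\choose e_s}=-\gamma_s^{-1}\sum_{j\in[b]\setminus\{s\}}\gamma_j\cdot\varphi{v\choose e_j},
\]
which shows that $\varphi{v\choose e_s}\in\Span{\{\varphi{v\choose e_i}\}_{i\in[b]\setminus\{s\}}}$. Consequently,
\[
\Span{\{\varphi{v\choose e_i}\}_{i\in[b]}}=\Span{\{\varphi{v\choose e_i}\}_{i\in[b]\setminus\{s\}}}.
\]
Applying the same argument to the leftmost nonzero coordinate $r(v)$ (whose associated $\gamma_{r(v)}$ is nonzero by definition) gives
\[
P_V=\Span{\{\varphi{v\choose e_i}\}_{i\in[b]\setminus\{r(v)\}}}=\Span{\{\varphi{v\choose e_i}\}_{i\in[b]}},
\]
and combining the two equalities yields the claim.

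There is really no hard step here: the argument is a direct appeal to the preceding lemma together with the invertibility of $\gamma_s$ in $\bF_q$. The only thing worth flagging is that the conclusion implicitly asserts that $\{\varphi{v\choose e_i}\}_{i\in[b]\setminus\{s\}}$ still spans a subspace of dimension $b-1$; this is automatic, however, because Theorem~\ref{theorem:Equidistant} already fixes $\dim P_V=b-1$, so the $b-1$ vectors in the new spanning set must in fact form a basis.
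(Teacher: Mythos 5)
Your proof is correct and follows essentially the same route as the paper: both use Lemma~\ref{lemma:computeMissing} to show that the omitted vector (indexed by $s$ or by $r(v)$) lies in the span of the remaining ones, so both reduced spans equal the full span $\Span{\{\varphi{v\choose e_i}\}_{i\in[b]}}$ and hence each other. The explicit solving for $\varphi{v\choose e_s}$ using $\gamma_s^{-1}$ is just a slightly more detailed rendering of the same step.
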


\begin{proof}
By Lemma~\ref{lemma:computeMissing}, 
\begin{eqnarray*}
\varphi{v \choose e_{r(v)}}&\in &\Span{\left\{\varphi{v\choose e_i}\right\}_{i\in[b]\setminus\{r(v)\}}}\\
\varphi{v \choose e_{s}}&\in &\Span{\left\{\varphi{v\choose e_i}\right\}_{i\in[b]\setminus\{s\}}},
\end{eqnarray*}
and hence,
\begin{eqnarray*}
P_V\triangleq\Span{\left\{\varphi{v\choose e_i}\right\}_{i\in[b]\setminus\{r(v)\}}}=\Span{\left\{\varphi{v\choose e_i}\right\}_{i\in[b]}}=\Span{\left\{\varphi{v\choose e_i}\right\}_{i\in[b]\setminus\{s\}}}
\end{eqnarray*}
\end{proof}

The following observation will be repeatedly used throughout our algorithms.

\begin{observation}\label{observation:vectorFromMatrix}
Let $A,B\in \bF_q^{b\times B}$ be two distinct row-equivalent matrices. If $r_1,\ldots,r_t$ is the series of row operations that transform $M_1$ to $M_2$, then for any $x\in\bF_q^B$ it is possible to compute $M_2x$ given $M_1x$ and $r_1,\ldots,r_t$.
\end{observation}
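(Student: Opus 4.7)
The plan is to reduce the statement to the associativity of matrix multiplication combined with the standard correspondence between row operations and left multiplication by elementary matrices.

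First, I would recall that each elementary row operation $r_i$ on a $b\times B$ matrix corresponds to left multiplication by some $b\times b$ elementary matrix $E_i$. Since $M_1$ and $M_2$ are row equivalent via the sequence $r_1,\ldots,r_t$, we have $M_2 = E_t\cdots E_1 M_1 = E M_1$ where $E\triangleq E_t\cdots E_1$ is invertible and can be constructed solely from the description of $r_1,\ldots,r_t$ (with no reference to the entries of $M_1$ or $M_2$).

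Next, by associativity of matrix multiplication,
\[
M_2 x = (E M_1) x = E (M_1 x),
\]
so computing $M_2 x$ from $M_1 x$ and $r_1,\ldots,r_t$ amounts to applying the elementary matrices $E_1,\ldots,E_t$ in turn to the column vector $M_1 x \in \bF_q^b$. Concretely: scaling the $i$th row of $M_1$ by $\gamma$ corresponds to scaling the $i$th entry of $M_1 x$ by $\gamma$; swapping two rows corresponds to swapping the two matching entries; and adding $\gamma$ times row $j$ to row $i$ corresponds to adding $\gamma$ times entry $j$ to entry $i$. In every case the operation on the length-$b$ vector $M_1 x$ uses only the scalars appearing in $r_i$, never the entries of $M_1$.

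There is no real obstacle here; the content of the observation is essentially the remark that the map $y\mapsto Ey$ makes sense on any column vector of length $b$, so the row operations that witness the row-equivalence of $M_1$ and $M_2$ can be ``replayed'' on the shorter vector $M_1 x$ without knowledge of $x$ itself. This is precisely the property that will later allow the repair procedure to simulate basis changes on stored codeword images.
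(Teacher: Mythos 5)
Your proposal is correct and follows essentially the same route as the paper: both express the row operations as left multiplication by elementary matrices and then use associativity, $M_2x=(E_t\cdots E_1)M_1x=E_t\cdots E_1(M_1x)$, to replay the operations on the vector $M_1x$. The extra detail you give on how each operation acts entrywise on the length-$b$ vector is a harmless elaboration of the same argument.
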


\begin{proof}
Let $E_1,\ldots,E_t$ be the invertible matrices corresponding to the row operations that transform $M_1$ to $M_2$; that is, $E_1 \cdot E_2 \cdot \ldots \cdot E_t \cdot M_1 = M_2$. The claim follows directly from the fact that for any $x\in\bF_q^B$, 
$E_1 \cdot E_2 \cdot \ldots \cdot E_t \cdot M_1x = M_2x$.
\end{proof}

\begin{remark}
The complexity analysis of Algorithms \ref{algorithm:MinimumBandwidthRepair} through \ref{algorithm:BRepair} in the sequel, relies mostly on the complexity of solving a system of linear equations over a finite field. This can be done either by a school book Gaussian elimination or by employing one of many faster algorithms (see \cite{LinearEquations} and references therein). However, to simplify the discussion we analyze our algorithms by using simple Gaussian elimination.
\end{remark}

\section{The 
Distributed Storage System}\label{section:theAlgorithms}
We are now in a position to describe the construction of the DSS. The feasibility of the described repair and reconstruction algorithms will depend on a certain assignment of vectors in $\bF_q^b$ to identify the storage nodes. Different assignments and their resulting parameters will be discussed separately in Section~\ref{section:Vectors}. With respect to a certain assignment of vectors to nodes, we will say that a set of nodes are \textit{linearly independent} if their assigned vectors are linearly independent.

\subsection{Storage}\label{section:storage}
Let $v_{1},\ldots,v_{n}$ be the available storage nodes. We identify each $v_i$ by a \textit{normalized} vector from $\bF_q^b$; that is, a vector whose leftmost nonzero entry $r(v_i)$ is 1. Let $M_{v_i}$ be the $(b-1)\times B$ matrix whose rows are the vectors \[\left\{\varphi{v_i\choose e_j}\right\}_{e_j \in [b]\setminus r(v_i)}.\]Following the terminology in \cite[Section III.A.]{StorageCodes}, each node $v_i$ is in fact associated with a subspace. In our system, this subspace is $P_{\Span{v_i}}\triangleq \left<M_{v_i}\right>$ (see Definition \ref{definition:codeword}).

Let $s$ be the source node, i.e. the node holding the file $x\in\bF_q^B$ to be stored. For the initial storage, $s$ sends $M_{v_i}\cdot x$ to $v_i$ for all $i=1,\ldots,n$. It is evident that $n\cdot (b-1)$ field elements are being sent. As for time complexity, computing the product $M_{v_i}\cdot x$ requires computing the matrix $M_{v_i}$. If the vector $v_i$ is given, each $\varphi{v_i \choose e_j}$ is computable from $v_i$ in $O(b\log b)$ time by using a proper sparse representation\footnote{e.g., a sparse representation of $x=\left(\gamma_1,\ldots,\gamma_B\right)$ is $\left\{\left(j,\gamma_j\right)\right\}_{j\vert\gamma_j\ne 0}$. This representation clearly requires ${O(w_H(x)\cdot \log B)=O(w_H(x)\cdot\log b)}$ space, where $w_H(x)$ is the Hamming weight of $x$.}. Hence, the matrix $M_{v_i}$ is computable in $O(b^2\cdot \log b)=O(B\log B)$. Using the same sparse representation, computing the product $M_{v_i}\cdot x$ takes an additional $O(B\log B)$ time for each~$v_i$. This stage requires $O(B\log B\cdot n)$ computation time and $O(B^{1/2}\cdot n)$ communication units.

\subsection{Minimum Bandwidth Repair}\label{section:MinimumBandwidthRepair}

In what follows we show that it is possible to repair a node failure by communicating a single field element from either $b-1$ or $b$ nodes. For functional repair no further computations are needed while for exact repair an additional $O(B^{2})$ algorithm should be applied by the newcomer.

\begin{algorithm}\label{algorithm:MinimumBandwidthRepair}
Let $v_j=\sum_{t=1}^b \gamma_t e_t$ be the failed node and let $u_{1},\ldots,u_{b'}$ be any set of active nodes such that $\Span{e_t}_{t\in[b]\setminus\{s\}}\subseteq \Span{u_{1},\ldots,u_{b'}}$ for some $s\in[b]$, where $\gamma_s\ne 0$ (obviously, $b-1\le b'\le b$). Each node $u_{\ell}$ computes 
\begin{eqnarray}\label{equation:MinimumBandwidthRepair}
\sum_{t=1}^b\gamma_t \varphi{u_{\ell}\choose e_t}\cdot x=\varphi{u_{\ell}\choose \sum_{t=1}^b\gamma_t e_t}\cdot x = \varphi{u_{\ell}\choose v_j}\cdot x,
\end{eqnarray}
and sends it to the newcomer.
\end{algorithm}

Notice that the elements \[\left\{\varphi{u_{\ell}\choose e_t}\cdot x\right\}_{t\in [b]\setminus r\left(u_{\ell}\right)}\] were sent to $u_{\ell}$ by $s$ in the initial stage (Subsection \ref{section:storage}). If needed, $\varphi{u_{\ell}\choose e_{r(u_{\ell})}}\cdot x$ may be computed using Lemma~\ref{lemma:computeMissing}. Hence, every node $u_\ell$ is capable of performing the computation in (\ref{equation:MinimumBandwidthRepair}).

\begin{lemma}\label{lemma:MinimumBandwidthrepair}
By using the information received from Algorithm \ref{algorithm:MinimumBandwidthRepair}, the newcomer may restore the information from the failed node $v_j$ by using $O(B^{2})$ field operations.
\end{lemma}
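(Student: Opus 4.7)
The plan is to set up a small linear system whose unknowns directly encode the data originally stored at $v_j$. Let $y_t\triangleq \varphi{v_j\choose e_t}\cdot x$ for $t\in[b]$; the information stored at $v_j$ in Subsection~\ref{section:storage} is precisely $\{y_t\}_{t\in[b]\setminus\{r(v_j)\}}$. First I would rewrite each received message as a linear form in the $y_t$'s: writing $u_\ell=\sum_t \alpha_{\ell,t}e_t$, the antisymmetry of $\varphi$ together with Lemma~\ref{lemma:linearPlucker} gives
\[\varphi{u_\ell\choose v_j}\cdot x \;=\; -\varphi{v_j\choose u_\ell}\cdot x \;=\; -\sum_{t\in[b]}\alpha_{\ell,t}\, y_t,\]
so the $b'$ messages received by the newcomer form $b'$ linear equations in the $b$ unknowns $y_1,\ldots,y_b$.

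The key step is the extraction of $\{y_t\}_{t\in[b]\setminus\{s\}}$ from this system. The hypothesis of Algorithm~\ref{algorithm:MinimumBandwidthRepair}, namely $\Span{\{e_t\}_{t\in[b]\setminus\{s\}}}\sus \Span{u_1,\ldots,u_{b'}}$, allows the newcomer to express each such $e_t$ as a known linear combination $e_t=\sum_\ell \beta_{t,\ell}u_\ell$ after a single Gaussian elimination on the $b'\times b$ matrix formed by the $u_\ell$'s. The bilinearity of $\varphi$ (Lemma~\ref{lemma:linearPlucker}) then yields
\[y_t \;=\; -\sum_{\ell} \beta_{t,\ell}\cdot \bigl(\varphi{u_\ell\choose v_j}\cdot x\bigr)\qquad\text{for every }t\neq s,\]
delivering $b-1$ of the $y_t$'s directly from the received data.

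To complete an \emph{exact} repair I would then branch on $s$: if $s=r(v_j)$, the recovered set $\{y_t\}_{t\neq s}$ is already the output required; otherwise I would compute the single missing scalar $y_s=-\gamma_s^{-1}\sum_{t\neq s}\gamma_t y_t$ via Lemma~\ref{lemma:computeMissing} and then discard $y_{r(v_j)}$ from the list, invoking Lemma~\ref{lemma:OmitAnyNonzero} to confirm that the resulting basis spans $P_{\Span{v_j}}$ as required. For merely functional repair this case split is unnecessary. The dominant cost is the single Gaussian elimination on the $b\times b$ matrix of the $u_\ell$'s (augmented with columns for each $e_t$ to be expressed), which runs in $O(b^3)=O(B^{3/2})$ field operations; the subsequent linear-combination evaluations of the $y_t$'s together with the optional correction for $y_s$ contribute only $O(b^2)$, comfortably inside the claimed $O(B^2)$ bound. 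I do not foresee any real obstacle: everything reduces mechanically to Lemmas~\ref{lemma:linearPlucker}--\ref{lemma:OmitAnyNonzero}, and the only subtle point is the distinction between exact and merely functional recovery, which is handled by the Lemma~\ref{lemma:computeMissing} correction above.
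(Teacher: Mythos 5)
Your argument is correct, and it reaches the conclusion by a somewhat different (and in fact more economical) route than the paper. The paper works at the level of the length-$B$ Pl\"{u}cker vectors: it observes that the matrix with rows $\varphi{u_\ell\choose v_j}$ has a submatrix row-equivalent to $-\bigl(\varphi{v_j\choose e_i}\bigr)_{i\in[b]\setminus\{s\}}$, invokes Lemma~\ref{lemma:OmitAnyNonzero} to identify its span with $P_{\Span{v_j}}$, and then applies Observation~\ref{observation:vectorFromMatrix} to replay the row operations on the received scalars; since those row operations are counted as acting on rows of length $B$, the stated cost is $O(b^2\cdot B)=O(B^{2})$. You instead never touch length-$B$ vectors: you treat the $b$ scalars $y_t=\varphi{v_j\choose e_t}\cdot x$ as unknowns, use bilinearity plus the antisymmetry $\varphi{u\choose v}=-\varphi{v\choose u}$ (not literally part of Lemma~\ref{lemma:linearPlucker}, but immediate from the alternating determinant, and used implicitly by the paper via the sign of $A$) to turn each received element into a known linear form in the $y_t$'s, and express each $e_t$, $t\ne s$, in terms of the $u_\ell$'s by elimination in $\bF_q^{b}$ only; consistency of the system makes the recovered $y_t$ independent of the particular representation even if the $u_\ell$ are dependent. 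Your coefficients $\beta_{t,\ell}$ are exactly the row operations the paper feeds into Observation~\ref{observation:vectorFromMatrix}, and your Lemma~\ref{lemma:computeMissing} correction for $y_s$ when $s\ne r(v_j)$ is the same bookkeeping the paper absorbs into Lemma~\ref{lemma:OmitAnyNonzero}; so the algebraic content coincides, but your organization yields the sharper bound $O(b^3)=O(B^{3/2})$, which of course implies the lemma's $O(B^{2})$ claim. The only caveat is presentational: the lemma asks for the newcomer to ``restore the information from the failed node,'' i.e.\ the vector $M_{v_j}\cdot x$, and your output $\{y_t\}_{t\in[b]\setminus\{r(v_j)\}}$ is precisely that, so nothing is missing.
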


\begin{proof}
The newcomer may retrieve $M_{v_j}\cdot x$, the lost information of $v_j$, by using Lemma~\ref{lemma:linearPlucker}, Lemma~\ref{lemma:computeMissing}, and Lemma~ \ref{lemma:OmitAnyNonzero}. Since $\Span{e_j}_{j\in[b]\setminus\{s\}}\subseteq \Span{u_{1},\ldots,u_{b'}}$, it follows that the matrix 
\[
\begin{pmatrix}
\varphi{u_{1}\choose v_j}\\
\vdots \\
\varphi{u_{b'}\choose v_j}
\end{pmatrix}
\]
has a submatrix which is equivalent to
\[
A\triangleq -\begin{pmatrix}
\varphi{v_j\choose e_{i_1}}\\
\vdots \\
\varphi{v_j\choose e_{i_{b'}}}
\end{pmatrix},
\]
where $\{i_1,\ldots,i_{b'}\}=[b]\setminus\{s\}$. By Lemma~\ref{lemma:OmitAnyNonzero}, $\Span{A}=P_{\Span{v_j}}$ (see Definition \ref{definition:codeword}), and hence $A$ is row equivalent to $-M_{v_j}$. Therefore, the vector $M_{v_j}\cdot x$ may be extracted from the received information by using Observation \ref{observation:vectorFromMatrix}.

Assuming the identity of $v_j$ is known, this algorithm requires communicating either $b-1$ or $b$ field elements. For functional repair no further computations are required. For exact repair the newcomer needs to perform Gaussian-like process on a matrix of size $b'\times B$. By Lemma~\ref{lemma:linearPlucker}, this process requires the same $O(b^2)$ row operations preformed during a Gaussian elimination of a $b'\times b$ matrix. However, these row operations are being preformed on rows of length $B$, and hence this Gaussian elimination requires $O(b^2\cdot B)=O(B^{2})$ field operations.
\end{proof}

Notice that the only requirement on the nodes $u_{1},\ldots,u_{b'}$ participating in $v_j$'s repair process is that $\Span{e_j}_{j\in[b]\setminus\{s\}}\subseteq \Span{u_{1},\ldots,u_{b'}}$. It follows that if $u_{1},\ldots,u_{b'}$ are active nodes that form a basis to $\bF_q^b$ (i.e. $b'=b$), then it is possible to repair \textit{any} node $v_j$ by using Algorithm~\ref{algorithm:MinimumBandwidthRepair}. 

\begin{corollary}\label{corollary:AutonomousRepair}
Using Algorithm \ref{algorithm:MinimumBandwidthRepair}, it is possible to add a new node that was not initially in the DSS (see Section \ref{section:storage}).
\end{corollary}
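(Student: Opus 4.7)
The plan is to observe that Algorithm~\ref{algorithm:MinimumBandwidthRepair} never uses the fact that $v_j$ was previously stored in the system: it only relies on knowing the normalized vector $v_j=\sum_{t=1}^{b}\gamma_t e_t\in\bF_q^b$ and on the existence of active nodes $u_1,\ldots,u_{b'}$ whose assigned vectors satisfy $\Span{e_t}_{t\in[b]\setminus\{s\}}\sus\Span{u_1,\ldots,u_{b'}}$ for some index $s$ with $\gamma_s\ne 0$. Thus the argument is essentially a reinterpretation of the preceding observation (that repair is possible for \emph{any} $v_j$ once $b$ active linearly independent nodes are available).

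First I would fix a normalized vector $v_j\in\bF_q^b$ that does not coincide with any of the vectors currently assigned to the nodes in the DSS, and verify that $v_j$ is admissible as an identifier (i.e.\ the total number of identifiers does not exceed $\frac{q^b-1}{q-1}$, which is allowed since distinct normalized vectors correspond to distinct $1$-subspaces of $\bF_q^b$). Next I would invoke the hypothesis that the system contains $b$ active linearly independent nodes $u_1,\ldots,u_b$, so that $\Span{u_1,\ldots,u_b}=\bF_q^b$ and in particular $\Span{e_t}_{t\in[b]\setminus\{s\}}\sus\Span{u_1,\ldots,u_b}$ for every $s\in[b]$; choosing any $s$ with $\gamma_s\ne 0$ (which exists since $v_j\ne 0$) places us exactly in the setting of Algorithm~\ref{algorithm:MinimumBandwidthRepair}.

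Then I would run Algorithm~\ref{algorithm:MinimumBandwidthRepair} as if $v_j$ were a failed node: each $u_\ell$ transmits $\varphi{u_\ell\choose v_j}\cdot x$ (computable from the data it already stores, together with Lemma~\ref{lemma:computeMissing} if needed), and by Lemma~\ref{lemma:MinimumBandwidthrepair} the newcomer recovers $M_{v_j}\cdot x$. Assigning $v_j$ as the new node's identifier and storing $M_{v_j}\cdot x$ yields a node indistinguishable from one that would have been initialized by the source $s$ in Subsection~\ref{section:storage}.

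There is no real obstacle here; the only subtle point is the linguistic one that Lemma~\ref{lemma:MinimumBandwidthrepair} is phrased in terms of a \emph{failed} node, whereas here $v_j$ never existed. I would make this explicit by stressing that the correctness of the algorithm depends solely on the algebraic identity in~(\ref{equation:MinimumBandwidthRepair}) and on the span condition, neither of which references any prior storage of $v_j$.
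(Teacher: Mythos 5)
Your proposal is correct and follows the same route as the paper, which states the corollary as an immediate consequence of the observation that Algorithm~\ref{algorithm:MinimumBandwidthRepair} only needs the identity of the (new) vector $v_j$ and a set of active nodes spanning $\Span{e_t}_{t\in[b]\setminus\{s\}}$ --- conditions met by any $b$ active linearly independent nodes, with no reference to $v_j$ having been stored before. Your added remarks (choosing a fresh normalized identifier within the bound $\frac{q^b-1}{q-1}$ and invoking Lemma~\ref{lemma:MinimumBandwidthrepair} for the recovery of $M_{v_j}\cdot x$) simply make explicit what the paper leaves implicit.
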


\subsection{Local Repair}\label{section:LocalRepair}
It is often required that a failed node will be repairable from as few other active nodes as possible. It is clear that without replication of nodes, a minimum of two active nodes is necessary for such a repair. Clearly, such a repair can be done by contacting $k$ nodes from which the reconstruction is possible. In the following we present an alternative repairing approach that may achieve  this minimum. The possibility of achieving this minimum depends on the specific assignment of vectors to the nodes. This assignment will be discussed it detail in Section~\ref{section:Vectors}.

\begin{algorithm}\label{algorithm:LocalRepair}
Let $v_j$ be the failed node and let $\{u_{1},\ldots,u_{\ell}\}$ be a set of active linearly independent nodes such that $v_j \in \Span{u_{1},\ldots,u_{\ell}}$. For all $t\in[\ell]$, the newcomer $\nu$ downloads the entire vector $M_{u_{t}}\cdot x$ from $u_{t}$.
\end{algorithm}

\begin{lemma}\label{lemma:LocalRepair}
By using the information received from Algorithm \ref{algorithm:LocalRepair}, the newcomer $\nu$ may restore the information of the failed node $v_j$ in $O(\ell^2\cdot b)$ field operations.
\end{lemma}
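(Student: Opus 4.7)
The plan is to express the failed node's identifier $v_j$ in the basis formed by the helpers' identifiers $u_1,\ldots,u_\ell$, and then invoke the bilinearity of $\varphi$ (Lemma~\ref{lemma:linearPlucker}) to assemble the required rows of $M_{v_j}\cdot x$ as linear combinations of the downloaded vectors $\{M_{u_t}\cdot x\}_{t\in[\ell]}$.

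First, since $v_j\in\Span{u_1,\ldots,u_\ell}$ and the $u_t$ are linearly independent, there exist unique coefficients $\alpha_1,\ldots,\alpha_\ell\in\bF_q$ with $v_j=\sum_{t=1}^\ell \alpha_t u_t$. The newcomer $\nu$ computes these by Gaussian elimination on the $b\times\ell$ system determined by the publicly known identifiers $v_j,u_1,\ldots,u_\ell$; this step is independent of the downloaded data and costs $O(\ell^2\cdot b)$ field operations.

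Next, for each $i\in[b]\setminus\{r(v_j)\}$, Lemma~\ref{lemma:linearPlucker} gives
\[
\varphi{v_j\choose e_i}\cdot x=\sum_{t=1}^\ell \alpha_t\cdot \left(\varphi{u_t\choose e_i}\cdot x\right).
\]
The scalar $\varphi{u_t\choose e_i}\cdot x$ is already an entry of the downloaded vector $M_{u_t}\cdot x$ whenever $i\ne r(u_t)$, and otherwise it can be reconstructed from the remaining $b-1$ entries of $M_{u_t}\cdot x$ via Lemma~\ref{lemma:computeMissing} in $O(b)$ operations. Filling in the at most $\ell$ missing entries costs $O(b\cdot\ell)$, and evaluating the displayed sum for each of the $b-1$ target rows of $M_{v_j}\cdot x$ costs an additional $O(b\cdot\ell)$.

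The total cost is therefore dominated by the Gaussian elimination of the first step, yielding the claimed $O(\ell^2\cdot b)$ bound. The only mildly delicate point in the argument is the bookkeeping forced by the fact that the rows of each $M_{u_t}$ are indexed by $[b]\setminus\{r(u_t)\}$ rather than all of $[b]$, so the one missing coordinate of each downloaded vector must be recovered via Lemma~\ref{lemma:computeMissing} before the summation; once this is handled, the remainder is a routine application of bilinearity.
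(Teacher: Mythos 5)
Your proposal is correct and follows essentially the same route as the paper's proof: compute the coefficients of $v_j$ in terms of $u_1,\ldots,u_\ell$ by Gaussian elimination ($O(\ell^2\cdot b)$), recover the missing entries $\varphi{u_t\choose e_{r(u_t)}}\cdot x$ via Lemma~\ref{lemma:computeMissing}, and assemble each $\varphi{v_j\choose e_i}\cdot x$ by bilinearity (Lemma~\ref{lemma:linearPlucker}) in $O(\ell\cdot b)$ further operations. No gaps; the bookkeeping about the omitted row index $r(u_t)$ that you flag is handled the same way in the paper.
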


\begin{proof}
Since $v_j\in\Span{u_{1},\ldots,u_{\ell}}$, it follows that $v_j=\sum_{t=1}^\ell \gamma_t u_{t}$ for some $\gamma_1,\ldots,\gamma_\ell \in \bF_q$. By the definition of the matrices $\{M_{u_{1}},\ldots,M_{u_{\ell}}\}$, $\nu$ downloads the set of elements
\begin{eqnarray*}
\left\{\varphi{u_{t}\choose e_i}\cdot x\right\}_{i\in[b]\setminus\{r(v_{i_t})\}}
\end{eqnarray*}
for all $t\in[\ell]$. The missing elements 
\[
\left\{\varphi{u_{t}\choose e_{r(u_{t})}}\cdot x\right\}_{t\in[\ell]}
\] are computed by Lemma~\ref{lemma:computeMissing} in $O(\ell\cdot b)$ field operations. The newcomer computes the coefficients $\gamma_1,\ldots,\gamma_\ell$, e.g. by performing Gaussian elimination on the matrix 
\begin{eqnarray*}
\begin{pmatrix}
u_{1}\\
\vdots\\
u_{\ell}\\
v_j
\end{pmatrix},
\end{eqnarray*}
a process requiring $O(\ell^2\cdot b)$ field operation. Having these coefficients the newcomer performs 
\begin{eqnarray}\label{equation:LocalRepair}
\sum_{t=1}^\ell \gamma_t \varphi{u_{t}\choose e_i}\cdot x&=&\varphi{\sum_{t=1}^\ell \gamma_t u_{t}\choose e_i}\cdot x\nonumber=\varphi{v_j\choose e_i}\cdot x.
\end{eqnarray}
for all $i\in[b]\setminus\{r(v_j)\}$ in $O(\ell\cdot b)$ operations, and reassembles the vector $M_j\cdot x$. Overall, Algorithm \ref{algorithm:LocalRepair} requires $O(\ell^2\cdot b)$ field operations and $\ell\cdot(b-1)$ communication units. 
\end{proof}

\begin{corollary}
Let $v_j$ be a failed node. If $\ell$ is the smallest integer such $v_j$ is in the linear span of $\ell$ other active nodes, then the locality of repairing $v_j$ is $\ell$.
\end{corollary}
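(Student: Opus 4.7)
The plan is to match two bounds. For the upper bound---that $\ell$ contacted nodes suffice---I would invoke the hypothesis of the corollary: by the definition of $\ell$, there exist active nodes $u_{1},\ldots,u_{\ell}$ whose assigned vectors span $v_j$. Minimality of $\ell$ forces these $\ell$ vectors to be linearly independent (otherwise, removing a dependent vector would produce a smaller spanning set, contradicting the choice of $\ell$). This is precisely the shape of input demanded by Algorithm \ref{algorithm:LocalRepair}, so feeding $\{u_{1},\ldots,u_{\ell}\}$ into it and applying Lemma \ref{lemma:LocalRepair} directly gives an exact repair of $v_j$ using $\ell$ contacts.

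For the lower bound---that no fewer than $\ell$ contacted nodes will do---I would appeal to the input hypothesis of Algorithm \ref{algorithm:LocalRepair} itself. Any valid invocation requires a linearly independent set of active nodes whose assigned vectors have $v_j$ in their linear span. If such a set had size $\ell'<\ell$, its existence would violate the minimality in the definition of $\ell$, and so no local repair of $v_j$ can use fewer than $\ell$ contacts. Combining the two bounds yields exactly $\ell$ as the number of contacted nodes.

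I do not anticipate a real technical obstacle: both directions are essentially immediate from the definition of Algorithm \ref{algorithm:LocalRepair} and from Lemma \ref{lemma:LocalRepair}. The only point that needs care in the write-up is the scope of ``locality'' in the statement. The argument above shows $\ell$ is tight for the local repair scheme of this subsection; it should \emph{not} be read as a universal lower bound across all possible repair mechanisms, since Algorithm \ref{algorithm:MinimumBandwidthRepair} can repair $v_j$ from $b-1$ nodes even when those nodes' span fails to contain $v_j$ (e.g., when the span equals $\Span{e_t}_{t\ne s}$ with $\gamma_s\ne 0$). Under a ``best over all schemes'' reading the quantity $\ell$ is therefore only the locality of the local repair scheme, which is the intended interpretation here.
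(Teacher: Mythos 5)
Your proposal is correct and follows essentially the same route as the paper: the corollary is an immediate consequence of Lemma \ref{lemma:LocalRepair} (achievability by running Algorithm \ref{algorithm:LocalRepair} on a maximal independent subset of a minimal spanning set, and the matching bound from the minimality of $\ell$ within that scheme). Your closing caveat that $\ell$ is the locality of the local-repair scheme, not a universal lower bound over all repair mechanisms such as Algorithm \ref{algorithm:MinimumBandwidthRepair}, matches the intended reading.
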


\subsection{Parallel Repair}\label{section:ParallelRepair}
Consider the scenario of multiple simultaneous node failures. Obviously, under $t$ failures, if the conditions of Algorithm \ref{algorithm:LocalRepair} are satisfied, then it is possible to execute $t$ sequential instances of the repair algorithm. We show that this could be improved in a certain special case. This is a simple consequence of Lemma \ref{lemma:LocalRepair}.

\begin{lemma}\label{lemma:ParallelRepair}
If $\{v_{i_1},\ldots,v_{i_t}\}$ is a set of failed nodes and $\{v_{j_1},\ldots,v_{j_s}\}$ is a set of active linearly independent nodes, of the remaining nodes, such that 
\[
\{v_{i_1},\ldots,v_{i_t}\}\subseteq \Span{v_{j_1},\ldots,v_{j_s}},
\]
then it is possible to repair all failures by communicating $s\cdot (b-1)$ field elements.
\end{lemma}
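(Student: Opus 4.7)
The plan is to reduce this to a single ``bulk download'' from the active nodes $v_{j_1},\ldots,v_{j_s}$ and then apply the local repair computation of Lemma~\ref{lemma:LocalRepair} separately for each failed node, without any further communication. Concretely, I would have the newcomer(s) download the entire stored vectors $M_{v_{j_1}}\cdot x,\ldots, M_{v_{j_s}}\cdot x$, which is exactly $s\cdot(b-1)$ field elements in total, and then reconstruct each $M_{v_{i_k}}\cdot x$ from this shared pool.

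The key observation driving the bandwidth savings is that the communication cost of Algorithm~\ref{algorithm:LocalRepair} is determined \emph{only} by what one downloads from the helper nodes, and the same downloaded data can be reused to repair any node lying in the span of those helpers. Since by hypothesis every failed $v_{i_k}$ satisfies $v_{i_k}\in \Span{v_{j_1},\ldots,v_{j_s}}$, each $v_{i_k}$ may be written as $v_{i_k}=\sum_{\ell=1}^{s}\gamma_\ell^{(k)} v_{j_\ell}$ for scalars obtained by Gaussian elimination on the $(s+1)\times b$ coefficient matrix; by Lemma~\ref{lemma:linearPlucker} we then have, for every $i\in[b]\setminus\{r(v_{i_k})\}$,
\[
\varphi{v_{i_k}\choose e_i}\cdot x \;=\; \sum_{\ell=1}^{s}\gamma_\ell^{(k)}\,\varphi{v_{j_\ell}\choose e_i}\cdot x,
\]
and each term on the right-hand side is already available from the bulk download (using Lemma~\ref{lemma:computeMissing} to fill in the missing pivot coordinate of each $M_{v_{j_\ell}}\cdot x$, exactly as in the proof of Lemma~\ref{lemma:LocalRepair}). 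This yields $M_{v_{i_k}}\cdot x$ for every $k\in[t]$.

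Therefore the entire procedure communicates only the $s\cdot(b-1)$ field elements used in the bulk download, regardless of $t$. I do not expect any real obstacle here beyond bookkeeping; the non-obvious point is simply to notice that the helper data is shared across all $t$ repairs rather than re-fetched per failure, so the proof is essentially a one-line appeal to Lemma~\ref{lemma:LocalRepair} plus the linearity identity above.
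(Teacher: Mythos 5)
Your proposal is correct and matches the paper's own argument: the paper likewise has a single party download the full contents of the $s$ helper nodes ($s\cdot(b-1)$ elements) once and then reuse that shared data to run the computation of Lemma~\ref{lemma:LocalRepair} for each failed node, with no further communication. Your explicit write-out of the coefficients $\gamma_\ell^{(k)}$ and the bilinearity step is just the detail the paper delegates to Lemma~\ref{lemma:LocalRepair}.
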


\begin{proof}
Assume that a third party $\Psi$ is managing the repair process of all $t$ nodes simultaneously. $\Psi$~may download the entire content of all nodes $\{v_{j_1},\ldots,v_{j_s}\}$, and compute the set $\{\varphi{v_{i_m}\choose e_\ell}\cdot x\}_{\ell=1}^{b}$ for each $m\in[t]$ using Algorithm \ref{algorithm:LocalRepair}. 
\end{proof}

The complexity of Lemma~\ref{lemma:ParallelRepair} remains $t$ times the complexity of Algorithm \ref{algorithm:LocalRepair}. However, the amount of communication is the same as in a single instance of Algorithm \ref{algorithm:LocalRepair}. It is evident that this algorithm requires good locality. An assignment of vectors to nodes that achieves locality is discussed in Subection~\ref{section:MinimumLocalityAssignment}.

\subsection{Reconstruction}
This subsection presents two reconstruction algorithms for two different models of communication. In Algorithm~\ref{algorithm:2Brepair}, which follows, the DC accesses $b$ active nodes and downloads their data in its entirety for the reconstruction. The number of communicated field elements is $b(b-1)=2B$. Algorithm~\ref{algorithm:BRepair}, which follows, uses the additional assumption that the nodes participating in the reconstruction know the identities of one another (e.g., by broadcast, shared memory or by acknowledgement from the DC), and guarantees reconstruction  by communicating $B$ field elements. This is the minimum communication that guarantees a complete reconstruction of $x$.

\begin{algorithm}\label{algorithm:2Brepair}
Let $\{u_{1},\ldots,u_{b}\}$ be a set of active linearly independent nodes. For each $j\in [b]$, the DC downloads the vector $M_{u_{j}}\cdot x$ from $u_{j}$ and computes the missing element $\varphi{u_{j}\choose e_{r(u_{j})}}\cdot x$ from each node by using Lemma~\ref{lemma:computeMissing}. The DC assembles the vector $w\in\bF_q^{b^2}$ such that\footnote{The entries of the vector $w\in\bF_q^{b^2}$ are identified by the elements of $[b]^2$ according to the lexicographic order.} $w_{(i,j)}=\varphi{u_j \choose e_i}\cdot x$, and the $b^2\times B$ matrix $A$ whose rows are $\{\varphi{u_i\choose e_j}\}_{i,j\in[b]}$. The vector $x$ is then reconstructed by solving the linear system of equations $Ax=w$. 
\end{algorithm}

\begin{lemma}\label{lemma:Reconstruction1}
The matrix $A$ in Algorithm \ref{algorithm:2Brepair} has full rank. In particular, the DC may extract $x$ using $O(B^3)$ field operations.
\end{lemma}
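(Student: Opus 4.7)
The plan is to establish that the $b^2 \times B$ matrix $A$ has column rank $B$, from which the unique solvability of $Ax=w$ (and hence correctness of the algorithm) follows; the $O(B^3)$ bound will then be immediate from standard Gaussian elimination on a matrix of these dimensions, since $b^2 = O(B)$.

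First, I would observe that by Lemma \ref{lemma:linearPlucker}, the row span of $A$ is closed under $\bF_q$-linear combinations in the second argument of $\varphi$. In particular, for every $i,j\in[b]$, expanding $u_j=\sum_{k=1}^{b}(u_j)_k\, e_k$ and using bilinearity gives $\varphi{u_i \choose u_j} = \sum_{k=1}^{b}(u_j)_k\cdot\varphi{u_i \choose e_k}\in\Span{A}$. The crux is then the following claim: the $B=\binom{b}{2}$ vectors $\{\varphi{u_i \choose u_j}\}_{1\le i<j\le b}$ form a basis of $\bF_q^B$. Granting the claim, $\Span{A}=\bF_q^B$, so $A$ has full column rank; since $w$ is by construction the image of the true file $x$ under $x\mapsto Ax$, the system $Ax=w$ has a unique solution.

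To prove the claim, I would use the fact that the standard basis of $\bF_q^B$ consists of the vectors $\{\varphi{e_k \choose e_l}\}_{1\le k<l\le b}$, by Definition \ref{definition:Plucker}. Writing $e_k=\sum_{i=1}^{b}T_{ki}u_i$ for the (invertible) change-of-basis matrix $T\in\bF_q^{b\times b}$, two applications of Lemma \ref{lemma:linearPlucker} yield
\[
\varphi{e_k \choose e_l}=\sum_{i,j=1}^{b}T_{ki}T_{lj}\cdot\varphi{u_i \choose u_j}=\sum_{1\le i<j\le b}\left(T_{ki}T_{lj}-T_{kj}T_{li}\right)\cdot\varphi{u_i \choose u_j},
\]
where we used $\varphi{u_i \choose u_i}=0$ and $\varphi{u_j \choose u_i}=-\varphi{u_i \choose u_j}$, both direct consequences of the determinant definition. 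Thus every standard basis vector lies in $\Span{\{\varphi{u_i \choose u_j}\}_{i<j}}$, so these $B$ vectors span $\bF_q^B$ and, being $B$ in number, form a basis.

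For the complexity step, since $A$ is $b^2\times B$ with $b^2=O(B)$ and has full column rank, applying Gaussian elimination to the augmented system $[A\mid w]$ recovers $x$ in $O(b^2\cdot B^2)=O(B^3)$ field operations. The main obstacle I anticipate is the linear independence step for $\{\varphi{u_i \choose u_j}\}_{i<j}$: morally this is the standard fact that a basis of $\bF_q^b$ induces a basis of $\bigwedge^{2}\bF_q^b$, but since the paper operates in explicit Pl\"ucker coordinates, it has to be verified via the change-of-basis computation above rather than invoked abstractly.
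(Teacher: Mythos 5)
Your proof is correct, but it takes a genuinely more roundabout route than the paper's. The paper argues directly on the rows of $A$: for each fixed $t\in[b]$ it uses linearity of $\varphi$ in the \emph{first} argument (Lemma \ref{lemma:linearPlucker}) to show that the submatrix with rows $\varphi{u_1\choose e_t},\ldots,\varphi{u_b\choose e_t}$ is row equivalent to the one with rows $\varphi{e_1\choose e_t},\ldots,\varphi{e_b\choose e_t}$, so $A$ is row equivalent (up to redundant rows) to the matrix with rows $\{\varphi{e_i\choose e_j}\}_{i\ne j}$, i.e.\ to the $B\times B$ identity; full rank and the $O(B^3)$ bound follow at once. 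You instead first use linearity in the \emph{second} argument to place the vectors $\varphi{u_i\choose u_j}$ in $\Span{A}$, and then perform a change of basis in both arguments (the exterior-square expansion with coefficients $T_{ki}T_{lj}-T_{kj}T_{li}$) to show $\{\varphi{u_i\choose u_j}\}_{i<j}$ spans $\bF_q^B$. The detour is unnecessary for this lemma, since the one-argument expansion $\varphi{e_s\choose e_t}=\sum_i T_{si}\,\varphi{u_i\choose e_t}$ already puts every standard basis vector in $\Span{A}$; but what it buys is that your key claim --- that $\{\varphi{u_i\choose u_j}\}_{i<j}$ is a basis of $\bF_q^B$ --- is exactly the rank statement needed for the matrix of Algorithm \ref{algorithm:BRepair}, so your single computation also yields the full-rank part of Lemma \ref{lemma:Reconstruction2}, which the paper proves separately. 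Two minor bookkeeping points: the consistency claim $w=Ax$ silently uses the index transposition between $w_{(i,j)}=\varphi{u_j\choose e_i}\cdot x$ and the ordering of the rows of $A$ (harmless, as the DC knows the correspondence), and you omit the $O(b^2B)=O(B^2)$ cost of computing the rows of $A$, which is in any case absorbed in the $O(B^3)$ bound.
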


\begin{proof}
By Lemmas~\ref{lemma:linearPlucker} and \ref{lemma:computeMissing}, for each $t\in[b]$ the submatrix 
\[
\begin{pmatrix}
\varphi{u_{1}\choose e_t}\\
\vdots\\
\varphi{u_{b}\choose e_t}\\
\end{pmatrix}
\]
is row equivalent to the matrix 
\[
\begin{pmatrix}
\varphi{e_1\choose e_t}\\
\vdots\\
\varphi{e_b\choose e_t}\\
\end{pmatrix}.
\]
The matrix $A$ is therefore row equivalent (up to redundant rows) to a matrix whose rows are $\{ \varphi{e_i\choose e_j} \}_{i\ne j}$, which may clearly be seen as equivalent to the identity matrix of size $B\times B$. Thus, the DC may use Observation \ref{observation:vectorFromMatrix} to recover $x$. Computing the rows of $A$ requires $O(b^2B)=O(B^2)$ operations. Solving a $b^2\times B$ linear system of equations requires additional $O(B^3)$ operations.
\end{proof}

Assuming that every node participating in the reconstruction algorithm knows the identity of all other participating nodes, it is possible to reduce the communication to merely $|x|=B$ field elements from $b-1$ nodes, $\frac{b}{2}$ elements from each node if $b$ is even and either $\frac{b-1}{2}$ elements or $\frac{b+1}{2}$ elements if $b$ is odd. As mentioned earlier, this is the minimum possible communication since no outer code is used. The following matrix, whose construction is deferred to Appendix A, will be used in Algorithm \ref{algorithm:BRepair}.

\begin{definition}\label{definition:GoodMatrix}
Let $N$ be a $b\times b$ matrix over $\bF_2$
such that 
\begin{enumerate}
\item[(1)] For all $i\in[b]$, $N_{i,b}=0$.
\label{definition:GoodMatrixZeroColumn}
\item[(2)] For all $i\in [b-1]$, $N_{b,j}=1$.
\label{definition:GoodMatrixOneRow}
\item[(3)] For all $i\in [b]$, $N_{i,i}=0$.
\label{definition:GoodMatrixZeroDiagonal}
\item[(4)] For all $i,j\in[b],i\ne j$, $N_{i,j}\ne N_{j,i}$.\label{definition:GoodMatrixAllElements}
\item[(5)] If $b$ is even then for all $i\in [b-1]$ the Hamming weight of the $i$th column is $\frac{b}{2}$.
\label{definition:GoodMatrixEvenb}
\item[(6)] If $b$ is odd then for all $i\in[b-1]$ the Hamming weight of the $i$th column is either $\frac{b-1}{2}$ or $\frac{b+1}{2}$ and the total Hamming weight of $N$ is $b\choose 2$.
\label{definition:GoodMatrixOddb}
\end{enumerate}
\end{definition}

\begin{example}\label{example:GoodMatrix}
The following matrices satisfy the requirements of Definition \ref{definition:GoodMatrix} for $b=6$ and $b=5$:
\[
\begin{pmatrix}
0&0&0&1&1&0\\
1&0&0&0&1&0\\
1&1&0&0&0&0\\
0&1&1&0&0&0\\
0&0&1&1&0&0\\
1&1&1&1&1&0\\
\end{pmatrix},
\begin{pmatrix}
0&0&1&1&0\\
1&0&0&1&0\\
0&1&0&0&0\\
0&0&1&0&0\\
1&1&1&1&0\\
\end{pmatrix}
\]
\end{example}

\begin{algorithm}\label{algorithm:BRepair}
Let $\{u_{1},\ldots,u_{b}\}$ be any set of linearly independent nodes and let $N$ be the matrix from definition \ref{definition:GoodMatrix}. For all $i\in[b-1]$, the DC downloads from node $u_i$ all elements $\varphi{u_i\choose u_j}\cdot x$ such that $N_{j,i}=1$. The DC assembles the vector $w\in\bF_q^{b\choose 2}$ such that\footnote{The entries of the vector $w\in\bF_q^{b\choose 2}$ are identified by all $2$-subsets of $[b]$ according to the lexicographic order.} $w_{\{i,j\}}=\varphi{u_i\choose u_j}\cdot x$, and a $B\times B$ matrix~$A$ whose rows are the vectors $\{\varphi{u_i\choose u_j}\}_{i\ne j}$. The vector $x$ is then reconstructed by solving the linear system of equations $Ax=w$.
\end{algorithm}

\begin{lemma}\label{lemma:Reconstruction2}
Using the information received from Algorithm \ref{algorithm:BRepair}, the DC may construct the vector $w$. In addition, the matrix $A$ from Algorithm \ref{algorithm:BRepair} has full rank, and hence $x$ is reconstructible by using $O(B^3)$ field operations.
\end{lemma}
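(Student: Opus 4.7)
The plan is to handle the two assertions of the lemma separately: first the reconstructibility of the auxiliary vector $w$, and then the invertibility of the $B\times B$ matrix $A$.

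For constructing $w$, the key fact is that the determinant is alternating in its rows, which (via Lemma~\ref{lemma:linearPlucker}) yields $\varphi{u_i\choose u_j}=-\varphi{u_j\choose u_i}$ for every $i\ne j$. Hence, in order to recover $w_{\{i,j\}}$ it suffices that the DC downloads \emph{either} $\varphi{u_i\choose u_j}\cdot x$ or $\varphi{u_j\choose u_i}\cdot x$, since the other may be obtained by negation. I would split into two cases. If one index equals $b$, say the pair is $\{i,b\}$ with $i\in[b-1]$, then condition~(2) of Definition~\ref{definition:GoodMatrix} guarantees $N_{b,i}=1$, so $\varphi{u_i\choose u_b}\cdot x$ is downloaded directly from $u_i$. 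Otherwise both $i,j\in[b-1]$, and by condition~(4) exactly one of $N_{j,i},N_{i,j}$ equals $1$, so (WLOG $N_{j,i}=1$) the DC again receives $\varphi{u_i\choose u_j}\cdot x$ from $u_i$. In every case the required entry of $w$ is produced, possibly after a single sign flip.

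For the full rank of $A$, I would reduce to the standard basis by invoking the bilinearity of $\varphi$ twice. Writing $u_i=\sum_{t\in[b]}a_{i,t}e_t$ and applying Lemma~\ref{lemma:linearPlucker}, together with $\varphi{e_s\choose e_t}=-\varphi{e_t\choose e_s}$ and $\varphi{e_s\choose e_s}=0$, one obtains
\[
\varphi{u_i\choose u_j}=\sum_{s<t}\bigl(a_{i,s}a_{j,t}-a_{i,t}a_{j,s}\bigr)\,\varphi{e_s\choose e_t}.
\]
The vectors $\{\varphi{e_s\choose e_t}\}_{s<t}$ are (up to sign and reordering) the standard basis of $\bF_q^{\binom{b}{2}}$, so expressing $A$ in this basis exhibits it as the second compound matrix of the $b\times b$ matrix $U$ whose rows are $u_1,\ldots,u_b$; its entries are precisely the $2\times 2$ minors of $U$.

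The main step is then to conclude from $\det U\ne 0$ that this compound matrix is non-singular. I would do this by invoking the classical identity $\det(\wedge^2 U)=(\det U)^{b-1}$; equivalently, the Pl\"ucker map is a linear isomorphism between $\wedge^2\bF_q^b$ and $\bF_q^{\binom{b}{2}}$, and $\{u_i\wedge u_j\}_{i<j}$ is a basis of $\wedge^2\bF_q^b$ whenever $\{u_i\}_{i\in[b]}$ is a basis of $\bF_q^b$. This step---the passage from linear independence of the $u_i$ to that of the $\{\varphi{u_i\choose u_j}\}_{i<j}$---is the heart of the argument and the only non-routine ingredient. Once full rank of $A$ is secured, Observation~\ref{observation:vectorFromMatrix} plus Gaussian elimination on the $B\times B$ system $Ax=w$ solves for $x$ in $O(B^3)$ field operations, as claimed.
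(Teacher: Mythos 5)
Your proposal is correct, and its treatment of $w$ (antisymmetry of $\varphi$ together with conditions (1), (2) and (4) of Definition~\ref{definition:GoodMatrix}) is the same bookkeeping as in the paper, only spelled out in more detail. Where you genuinely diverge is the full-rank argument for $A$. The paper stays elementary: using only Lemma~\ref{lemma:linearPlucker} it shows, block by block, that after adjoining the zero rows $\varphi{u_i\choose u_i}$ and redundant rows, the rows $\{\varphi{u_i\choose u_j}\}$ are row equivalent first to $\{\varphi{u_i\choose e_t}\}$ and then to $\{\varphi{e_s\choose e_t}\}_{s\ne t}$, which is the $B\times B$ identity up to signs and repetitions; everything reduces to the bilinearity lemma already proved, and Observation~\ref{observation:vectorFromMatrix} finishes. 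You instead express each row in the basis $\{\varphi{e_s\choose e_t}\}_{s<t}$, identify $A$ with the second compound matrix of the nonsingular $b\times b$ matrix $U$ whose rows are $u_1,\ldots,u_b$, and invoke the classical Sylvester--Franke identity $\det(\wedge^2 U)=(\det U)^{b-1}$ (equivalently, that $\{u_i\wedge u_j\}_{i<j}$ is a basis of $\wedge^2\bF_q^b$ whenever $\{u_i\}_{i\in[b]}$ is a basis of $\bF_q^b$, which holds over any field). Your route is cleaner structurally, handles the $i<j$ indexing of the $B\times B$ matrix directly without the paper's slightly loose ``redundant rows'' step, and even gives the determinant of $A$ explicitly; its cost is importing an exterior-algebra fact not proved in the paper, whereas the paper's argument is self-contained. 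Both are valid and yield the same $O(B^3)$ Gaussian-elimination conclusion.
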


\begin{proof}
By (4) of Definition \ref{definition:GoodMatrix} it is evident that for all $i,j\in[b]$, $i\ne j$, the DC receives the element $\varphi{u_i\choose u_j}\cdot x$ exactly once.
To prove that the reconstruction of $x$ is possible, we show that $A$ is row equivalent to a matrix whose rows are $\{\varphi{e_i\choose e_j}\}_{i\ne j}$. The latter may easily be seen as equivalent to the $B\times B$ identity matrix. For any $i\in[b]$, add the zero row $\varphi{u_i\choose u_i}$ to the proper submatrix to get
\[
\begin{pmatrix}
\varphi{u_{i}\choose u_1}\\
\vdots\\
\varphi{u_{i}\choose u_b}\\
\end{pmatrix}.
\]
By Lemma~\ref{lemma:linearPlucker} this matrix is row equivalent to
\[
\begin{pmatrix}
\varphi{u_i\choose e_1}\\
\vdots\\
\varphi{u_i\choose e_b}\\
\end{pmatrix}.
\]
By rearranging the rows of $A$ we may consider submatrices of the form
\[
\begin{pmatrix}
\varphi{u_1\choose e_i}\\
\vdots\\
\varphi{u_b\choose e_i}\\
\end{pmatrix}.
\]
for all $i\in[b]$. These submatrices are row equivalent by Lemma~\ref{lemma:linearPlucker} to 
\[
\begin{pmatrix}
\varphi{e_1\choose e_i}\\
\vdots\\
\varphi{e_b\choose e_i}\\
\end{pmatrix}.
\]
Hence, by addition of redundant rows, we get that $A$ is equivalent to the identity matrix. Thus, the reconstruction of $x$ is possible by Gaussian elimination, requiring $O(B^3)$ operations.
\end{proof}

\subsection{Modification} \label{section:modification}
A useful property of a DSS is being able to update a small fraction of $x$ without having to initialize the entire system. The linear nature of our code and the absence of an outer code allows these modifications to be done efficiently. In particular, the complexity of the process is a function of the Hamming distance $d_H(x,y)$, where $y$ is the modification of the vector $x$. In MDS based distributed storage systems a change of a single bit of $x$ usually requires changing a large portion of the data. Therefore, one more advantage of our system is revealed.

\begin{lemma}
If $x\in\bF_q^B$ is stored in the system, it is possible to update the system to contain $y\in\bF_q^B$ by communicating $\left(\log B+\log q\right)\cdot d_H(x,y)\cdot n$ bits.
\end{lemma}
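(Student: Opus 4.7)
The plan is to exploit the linearity of the storage map and the absence of an outer code. Define $z\triangleq y-x\in\bF_q^B$ and observe that for every node $v_i$,
\[
M_{v_i}\cdot y = M_{v_i}\cdot x + M_{v_i}\cdot z,
\]
so that, once $z$ is known at $v_i$, the node can update its stored vector $M_{v_i}\cdot x$ to $M_{v_i}\cdot y$ by a purely local computation. Hence the whole communication task reduces to transmitting $z$ from the source to each of the $n$ storage nodes.

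The key observation is that the Hamming weight of $z$ equals $d_H(x,y)$, which admits a sparse representation of the form $\{(j,z_j)\,:\,z_j\ne 0\}$. Each such pair consists of an index in $[B]$ (requiring $\log B$ bits) and a value in $\bF_q$ (requiring $\log q$ bits), so $z$ can be described using $(\log B+\log q)\cdot d_H(x,y)$ bits in total.

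The protocol I would propose is as follows. First, the source computes $z=y-x$ and its sparse representation. Second, it transmits this representation to each of the $n$ nodes, costing exactly $(\log B+\log q)\cdot d_H(x,y)\cdot n$ bits as claimed. Third, every node $v_i$ reconstructs its matrix $M_{v_i}$ from its own identifier $v_i$ exactly as in Subsection~\ref{section:storage} (no additional communication is needed, since $v_i$ is known locally), computes $M_{v_i}\cdot z$ by summing at most $d_H(x,y)$ scaled columns of $M_{v_i}$, and adds the result to the currently stored vector $M_{v_i}\cdot x$ to obtain $M_{v_i}\cdot y$.

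There is no real obstacle in the argument; the statement is essentially a direct consequence of the linearity of the Plücker-based encoding together with the sparse description of $z$. The only mild point worth noting is that the $n$ factor in the bound is unavoidable under this scheme because every node's stored content depends on $z$ through a different matrix $M_{v_i}$, so the same bits must reach all nodes; this is precisely what the statement asserts.
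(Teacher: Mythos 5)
Your proposal is correct and follows essentially the same argument as the paper: communicate the sparse representation $\{(\ell_i,\delta_i)\}$ of $y-x$ to every node, which then locally assembles $M_{v_i}$ and updates $M_{v_i}\cdot x + M_{v_i}\cdot(y-x)=M_{v_i}\cdot y$ by linearity. No gaps; the accounting of $(\log B+\log q)\cdot d_H(x,y)\cdot n$ bits matches the paper exactly.
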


\begin{proof}
Each node receives a list $\{\left(\delta_i,\ell_i\right)\}_{i=1}^{d_H(x,y)}$, where $\delta_i\in\bF_q$ and $\ell_i\in[B]$. The list indicates the values of the nonzero entries of the vector $y-x$. Each node $v$, holding the vector $M_v\cdot x$ (see Section~\ref{section:storage}) may assemble the matrix $M_v$ and compute:
\begin{eqnarray*}
M_v\cdot x + M_v\cdot(y-x)=M_v\cdot y.
\end{eqnarray*}
Communicating the list $\{\left(\delta_i,\ell_i\right)\}_{i=1}^{d_H(x,y)}$ to all the $n$ nodes clearly requires $\left(\log B+\log q\right)\cdot d_H(x,y)\cdot n$ bits.
\end{proof}

\section{Assignment of Vectors}\label{section:Vectors}

In Section \ref{section:theAlgorithms} we proved that the performance of the detailed algorithms strongly relies on the chosen vectors $v_1,\ldots,v_n$. Since both repair and reconstruction algorithms require linearly independent nodes, it follows that the assigned set of vectors should contain a basis to $\bF_q^b$ even after multiple failures. 

Choosing $n = \qbin{b}{1}{q}$ and assigning \textit{all} possible normalized vectors would suffice for repairing exponentially many failures. However, using $\qbin{b}{1}{q}=\Theta(q^b)$ storage nodes to store a file of size ${B=\Theta(b^2)}$ is unnecessary, as will be shown in the sequel. Furthermore, expecting exponentially many failures is nonrealistic.

In order to achieve reasonable failure resilience using a reasonable number of nodes, it suffices to consider the case $n=O(b)$. Subsection~\ref{section:MinimumBandwidthAssignment} discusses an assignment of vectors compatible with Algorithm \ref{algorithm:MinimumBandwidthRepair} presented in Subsection~\ref{section:MinimumBandwidthRepair}. An assignment compatible with Algorithms \ref{algorithm:LocalRepair} of Subsection~\ref{section:LocalRepair}~and~also for the algorithm of Subsection \ref{section:ParallelRepair} is presented in Subsection~\ref{section:MinimumLocalityAssignment}.

\begin{definition}\label{definition:tResilient}
For $t\in\bN$ a set $S\subseteq\bF_q^b$ is called a $t$-resilient spanning set if every $t$-subset $T$ of $S$ satisfies $\Span{S\setminus T}=\bF_q^b$.
\end{definition}

\begin{observation}
If $S$ is a $t$-resilient spanning set then by using $|S|$ storage nodes assigned with the vectors in $S$ (see Subsection \ref{section:storage}) then it is possible to repair and reconstruct in the presence of up to $t$ simultaneous node failures.
\end{observation}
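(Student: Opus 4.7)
The plan is to argue that the hypothesis $\Span{S\setminus T}=\bF_q^b$ for every $t$-subset $T\subseteq S$ supplies exactly the precondition required to invoke each of the earlier repair and reconstruction algorithms on the surviving nodes. So I would first fix an arbitrary failure set $F\subseteq S$ with $|F|\le t$; by Definition~\ref{definition:tResilient} the remaining nodes $S\setminus F$ still span $\bF_q^b$, and I would extract from them, by a standard basis-from-spanning-set argument, a linearly independent subset $\{u_1,\ldots,u_b\}$ of exactly $b$ surviving nodes.

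For reconstruction, I would feed this basis directly into Algorithm~\ref{algorithm:2Brepair} (or Algorithm~\ref{algorithm:BRepair}), whose stated input is ``any set of linearly independent active nodes of size $b$.'' Lemma~\ref{lemma:Reconstruction1} (resp.~Lemma~\ref{lemma:Reconstruction2}) then yields the vector $x$, which establishes the reconstruction half of the claim.

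For repair, I would show that every $v_j\in F$ can be restored from the surviving nodes. Using the same basis $\{u_1,\ldots,u_b\}\subseteq S\setminus F$ of $\bF_q^b$, I observe that for every $s\in[b]$ the inclusion $\Span{e_t}_{t\in[b]\setminus\{s\}}\subseteq\Span{u_1,\ldots,u_b}=\bF_q^b$ holds trivially, which is precisely the hypothesis of Algorithm~\ref{algorithm:MinimumBandwidthRepair} (take $s$ to be any index with $\gamma_s\ne 0$ in the expansion of $v_j$ in the standard basis). Lemma~\ref{lemma:MinimumBandwidthrepair} then recovers $M_{v_j}\cdot x$. Alternatively, since $v_j\in\bF_q^b=\Span{S\setminus F}$, one may also invoke Algorithm~\ref{algorithm:LocalRepair} on a suitable independent subset of $S\setminus F$ containing $v_j$ in its span, and Lemma~\ref{lemma:LocalRepair} completes the recovery.

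The only mild subtlety I anticipate is quantifier order: the bound $t$ must cover the \emph{worst-case} choice of failure set $F$, not a single fixed one, since an adversary (or nature) chooses which up-to-$t$ nodes fail. But this is exactly what Definition~\ref{definition:tResilient} provides by quantifying over every $t$-subset $T$ of $S$, so no additional argument is required. The proof is therefore short and reduces to an immediate verification that the preconditions of the previously established algorithms are met.
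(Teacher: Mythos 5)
Your proposal is correct and matches the paper's intended (and essentially unstated) reasoning: the paper leaves this observation without an explicit proof, precisely because it reduces, as you argue, to extracting a basis of $\bF_q^b$ from the surviving nodes and then invoking the preconditions of Algorithms~\ref{algorithm:MinimumBandwidthRepair}, \ref{algorithm:2Brepair} (or \ref{algorithm:BRepair}), exactly as in the remark following Lemma~\ref{lemma:MinimumBandwidthrepair} that a surviving basis suffices to repair any node. No gaps; your handling of the ``up to $t$'' quantifier is also fine since fewer than $t$ failures only enlarge the surviving set.
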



\begin{example}
The following set is a $2$-resilient spanning set in $\bF_2^7$:
\[
\begin{array}{ccccccc}
1&0&0&0&0&0&0\\
0&1&0&0&0&0&0\\
0&0&1&0&0&0&0\\
0&0&0&1&0&0&0\\
0&0&0&0&1&0&0\\
0&0&0&0&0&1&0\\
0&0&0&0&0&0&1\\
1&1&1&1&1&1&1\\
1&1&1&1&0&0&0\\
1&1&0&0&1&1&0\\
1&0&1&0&1&0&1
\end{array}
\]
\end{example}


\subsection{Minimum Bandwidth Assignment} \label{section:MinimumBandwidthAssignment}

In what follows we present a construction of a set of vectors $\{v_1,\ldots,v_n\}$ compatible with Algorithm~\ref{algorithm:MinimumBandwidthRepair}, achieving $d\beta\le \alpha+1$.

\begin{lemma}\label{lemma:LinearCodeOne}
Let $b\in \bN$ and let $C$ be a \textit{linear} block code of length $c\cdot b$ for some constant $c>0$, dimension $b$, and minimum Hamming distance $\delta$ over $\bF_q$. If $M$ is a generator matrix of $C$ then the columns of $M$ are a $(\delta-1)$-resilient spanning set (see Definition~\ref{definition:tResilient}). 
\end{lemma}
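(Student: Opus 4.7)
The plan is to prove the contrapositive via a standard generator-matrix/codeword argument. Observe that $M$ is a $b\times cb$ matrix, and the claim amounts to saying that deleting any $\delta-1$ columns from $M$ still leaves a matrix of rank $b$. Equivalently, I must show that no subset of $cb-(\delta-1)$ columns of $M$ fails to span $\bF_q^b$.

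First I would assume for contradiction that there exists a $(\delta-1)$-subset $T$ of the columns whose complement fails to span $\bF_q^b$. Let $M_{\overline{T}}$ denote the submatrix consisting of the columns outside $T$. The assumption gives $\rank M_{\overline{T}} < b$, so the row space of $M_{\overline{T}}$ is a proper subspace of $\bF_q^b$; equivalently, there exists a nonzero row vector $y\in\bF_q^b$ such that $y\cdot M_{\overline{T}}=0$.

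Next I would lift this back to $M$: the vector $y\cdot M$ has zero entries in every coordinate outside $T$, so its Hamming weight is at most $|T|=\delta-1$. Since $M$ is a generator matrix of $C$, the vector $y\cdot M$ is a codeword of $C$. Moreover, $M$ has full row rank $b$ (being a generator matrix of a $b$-dimensional code), so $y\ne 0$ forces $y\cdot M\ne 0$. This produces a nonzero codeword of weight at most $\delta-1$, contradicting the fact that $C$ has minimum distance $\delta$.

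There is no real obstacle here; the only subtlety worth spelling out is that the full row rank of $M$ is needed to ensure that $y\cdot M$ is genuinely a nonzero codeword, which is what converts the rank deficiency into a short-weight codeword contradiction. No computation is required beyond noting that the row space and column space of $M$ share the same rank.
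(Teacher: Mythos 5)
Your proof is correct and follows essentially the same route as the paper: both turn a rank deficiency of the column-punctured generator matrix into a codeword supported on at most $\delta-1$ coordinates (the paper phrases this as two distinct row combinations agreeing outside the removed columns, whose difference is your $y\cdot M$), contradicting the minimum distance, and both invoke the equality of row and column rank. Your explicit remark that the full row rank of $M$ guarantees $y\cdot M\ne 0$ is a welcome clarification of a point the paper leaves implicit.
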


\iffull
\begin{proof}
Let $S\subseteq \bF_q^b$ be the set of columns of $M$ and let $T\subseteq S$ of size $\delta-1$. Let $M'\in \bF_q^{b\times (cb-\delta+1)}$ be the result of removing the columns $T$ from $M$. We show that $\dim\Span{M'}=b$. Assume for contradiction that there exists two different linear combinations of rows of $M'$ that yield the same row vector, that is
\[
\sum_{i\in I}\gamma_i M'_i=\sum_{j\in J}\delta_j M'_j,
\]
where $M_i'$ denotes the $i$th row of $M'$, $\gamma_i,\delta_i\in\bF_q$, and $I,J\subseteq [b]$. Consider the codewords
\begin{eqnarray*}
c_1&\triangleq &\sum_{i\in I} \gamma_i M_i\\
c_2&\triangleq &\sum_{j\in J} \delta_j M_j\\
\end{eqnarray*}
where $M_i$ is the $i$th row of $M$. Clearly, $c_1$ and $c_2$ are codewords of $C$. However, they share $cb-\delta+1$ identical entries, which implies $d_H(c_1,c_2)\le \delta-1$, a contradiction to the minimum distance of $C$. Therefore, there are $q^b$ different vectors in $\left<M'\right>$, and $\rank(M')=\dim\Span{M'}=b$. Since the column and row rank are equal, it follows that the set $S\setminus T$ spans $\bF_q^b$ for any $\delta-1$ subset $T$ of $S$. Therefore $S$ is a $(\delta-1)$ resilient spanning set.
\end{proof}
\fi
The inverse of Lemma~\ref{lemma:LinearCodeOne} is also true, as stated in the next lemma.

\begin{lemma}
Let $S\subseteq \bF_q^b$ be an assignment of vectors to nodes in some DSS which is resilient to $t$ node failures by using the algorithms described in Section \ref{section:theAlgorithms}. If $G$ is the matrix whose columns are the elements of $S$ and $C\triangleq \{xG~\vert~x\in\bF_q^b\}$ then $C$ is a linear code of minimum Hamming distance~$t+1$.
\end{lemma}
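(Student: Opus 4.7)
The plan is to prove this as the exact converse of Lemma~\ref{lemma:LinearCodeOne}, by using the elementary correspondence between the minimum distance of a linear code and the column-rank behaviour of its generator matrix. The first step is to unpack the hypothesis: by the observation following Definition~\ref{definition:tResilient}, ``resilient to $t$ node failures'' (via the algorithms in Section~\ref{section:theAlgorithms}, all of which require active nodes to span $\bF_q^b$) means that $S$ is a $t$-resilient spanning set; taking $t$ to be maximal, there is also a distinguished set $T^{*}\subseteq S$ of size $t+1$ whose complement fails to span $\bF_q^b$. Note that $S$ itself is a spanning set, so $G$ has row-rank $b$ and $C$ is a linear code of dimension $b$ and length $|S|$.

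For the lower bound $d_H(C)\ge t+1$, I would take any nonzero codeword $xG$ and let $T\subseteq S$ be its support, so that $x\cdot g=0$ for every column $g\notin T$. If $|T|\le t$, deleting $T$ from $G$ would leave a matrix whose columns all lie in the hyperplane $\{y:x\cdot y=0\}$, contradicting the $t$-resilience of $S$; hence $|T|\ge t+1$. For the matching upper bound, I would take the set $T^{*}$ from the first paragraph: since the columns of $G$ outside $T^{*}$ span a proper subspace of $\bF_q^b$, there is a nonzero $x$ annihilating all of them. The codeword $xG$ then has support contained in $T^{*}$, and it is nonzero because $S$ spans $\bF_q^b$ (so $xG=0$ would force $x=0$). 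Therefore $d_H(C)\le |T^{*}|=t+1$, and the two bounds combine to give equality.

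The only delicate point -- and the one I expect to be the main obstacle -- is pinning down the word ``resilient to $t$ node failures''. The lower bound $d_H(C)\ge t+1$ uses only that \emph{every} $t$-subset of $S$ can be deleted, while the upper bound $d_H(C)\le t+1$ requires the existence of a bad $(t+1)$-subset, i.e.\ the maximality of $t$. This maximal reading is the natural one and parallels the phrasing of Lemma~\ref{lemma:LinearCodeOne} (which starts from a code of minimum distance $\delta$ and produces $(\delta-1)$-resilience, not $(\delta-2)$-resilience). I would therefore state the maximality explicitly at the beginning of the proof; once this is done, the rest is a two-line instance of the standard duality between minimum distance and column-dependence of a generator matrix.
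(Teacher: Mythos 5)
Your argument is correct. Note that the paper itself gives no proof of this lemma --- it is one of the proofs deferred to the full version --- so there is nothing to compare against line by line; but your route is the natural one and is sound. The lower bound is exactly the contrapositive of the argument for Lemma~\ref{lemma:LinearCodeOne}, phrased through the standard duality: a nonzero codeword $xG$ of weight at most $t$ would force all columns of $G$ outside its support into the hyperplane $\{y : x\cdot y=0\}$, contradicting $t$-resilience (if the support has size strictly less than $t$, enlarge it to a $t$-subset, which only shrinks the remaining span). The upper bound correctly uses the existence of a bad $(t+1)$-subset $T^{*}$, a nonzero functional annihilating $\Span{S\setminus T^{*}}$, and the fact that $G$ has full row rank (because $S$ spans) to guarantee the resulting codeword is nonzero.

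Your "delicate point" is indeed the one worth flagging: with the literal reading of resilience as Definition~\ref{definition:tResilient} (every $t$-subset may fail), one only gets $d_H(C)\ge t+1$, and the equality asserted in the lemma requires $t$ to be the maximal resilience, i.e.\ the existence of some $(t+1)$-subset whose removal destroys the spanning property. Making that maximality explicit, as you do, is the right fix and is consistent with the intended converse of Lemma~\ref{lemma:LinearCodeOne}. The only other hypothesis you rely on --- that resilience of the DSS to $t$ failures via the algorithms of Section~\ref{section:theAlgorithms} forces $\Span{S\setminus T}=\bF_q^b$ for every $t$-subset $T$ --- is justified exactly as you say, since reconstruction (Algorithms~\ref{algorithm:2Brepair} and \ref{algorithm:BRepair}) needs $b$ linearly independent surviving nodes, so the surviving vectors must span $\bF_q^b$.
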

\begin{example}\label{example:Justesen}
Let $C$ to a be binary Justesen code \cite{JustesenCodes} of length $O(b)$, dimension $b$, and minimum Hamming distance $\delta b$. We get that the corresponding code (see Section \ref{section:theAlgorithms}) uses $O(b)$ storage nodes while being able to recover from any $\delta b$ simultaneous node failures. In addition, the code uses the binary field. This choice admits the following parameters: $q=2$, $B={b \choose 2}$, $n=O(B^{1/2})$, $d=b=O(B^{1/2})$, $k=b=O(B^{1/2})$, $\alpha=b-1=O(B^{1/2})$, and $\beta=1$.
\end{example}

\subsection{Minimum Locality Assignment}\label{section:MinimumLocalityAssignment}
Algorithm \ref{algorithm:LocalRepair} in
Subsection \ref{section:LocalRepair} may possibly achieve the optimal locality. It is evident from Lemma~\ref{lemma:LocalRepair} that in order to get good locality, the set $\{u_1,\ldots,u_\ell\}$ from Algorithm \ref{algorithm:LocalRepair} is required to be small. However, this requirement conflicts with the requirements of Algorithms \ref{algorithm:MinimumBandwidthRepair}, \ref{algorithm:2Brepair}, and \ref{algorithm:BRepair}, since they all involve large linearly independent sets.

In this subsection we show that by choosing some basis of $\bF_q^b$, partitioning it to equally sized subsets and taking the linear span of each subset, some locality is achievable. The resulting failure resilience will grow with the field size. Thus, this technique will be particularly useful in large fields. 

\begin{definition}\label{definition:V}
Let $c$ be a positive integer such that $c$ divides $b$, and let $A\triangleq\{v_1,\ldots,v_b\}$ be a basis of~$\bF_q^b$. Partition $A$ into $\frac{b}{c}$ equally sized subsets $A_i\triangleq\{v_{ic+1},\ldots,v_{(i+1)c}\}$ for $i\in \{0,\ldots,\frac{b}{c}-1\}$. Let $V_i\subseteq \bF_q^b$ be a set of $\qbin{c}{1}{q}$ representatives for the 1-subspaces of $\left<A_i\right>$. Finally, let $V\triangleq \bigcup_{i=1}^{b/c}V_i$.
\end{definition}

\begin{lemma}
The set $V$ from Definition~\ref{definition:V} is a $\left(q^{c-1}-1\right)$-resilient spanning set (see Definition \ref{definition:tResilient}). Furthermore, assigning $V$ to nodes in a DSS allows repairing any node failure using at most $c$ active nodes. 
\end{lemma}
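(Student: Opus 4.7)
The plan is to reduce the resilience claim to a per-block statement about how many $1$-subspaces of $\langle A_i\rangle$ can fit inside a single hyperplane of $\langle A_i\rangle$, and then to derive the locality statement as an immediate consequence.

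First I would observe that $\bF_q^b=\bigoplus_{i=0}^{b/c-1}\langle A_i\rangle$, since the $A_i$'s partition a basis. Consequently, for any $T\subseteq V$ the set $V\setminus T$ spans $\bF_q^b$ iff $V_i\setminus T$ spans $\langle A_i\rangle$ for every $i$. So the whole problem localizes: I only need to check that removing a bounded number of representatives from each $V_i$ still leaves a spanning set of $\langle A_i\rangle$.

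Next, the counting step. A proper subspace of $\langle A_i\rangle$ is contained in some hyperplane, and the number of $1$-subspaces of a $(c-1)$-dimensional $\bF_q$-space is $\qbin{c-1}{1}{q}=\frac{q^{c-1}-1}{q-1}$. Since $|V_i|=\qbin{c}{1}{q}=\frac{q^c-1}{q-1}$, after removing $t$ elements the remainder has size $\frac{q^c-1}{q-1}-t$. This exceeds $\frac{q^{c-1}-1}{q-1}$ exactly when $t\le\frac{q^c-1}{q-1}-\frac{q^{c-1}-1}{q-1}-1=q^{c-1}-1$. In that case no hyperplane can contain all of $V_i\setminus T$, so $V_i\setminus T$ spans $\langle A_i\rangle$. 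Since $|T\cap V_i|\le|T|\le q^{c-1}-1$ whenever $|T|\le q^{c-1}-1$, the bound holds simultaneously for each block, and the first claim follows.

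For the locality statement, let $v_j$ be a failed node with assigned vector in some $V_i$; then $v_j\in\langle A_i\rangle$. By what was just proved, the active representatives in $V_i$ still span $\langle A_i\rangle$, so one can extract from them $c$ linearly independent vectors $u_1,\dots,u_c$ whose span equals $\langle A_i\rangle$ and in particular contains $v_j$. Algorithm~\ref{algorithm:LocalRepair} then repairs $v_j$ from these $c$ nodes.

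The only genuinely delicate point is the exact arithmetic relating $|V_i|$, the hyperplane bound $\qbin{c-1}{1}{q}$, and the resilience parameter $q^{c-1}-1$; once that identity is nailed down, everything else follows from the direct sum decomposition and from Algorithm~\ref{algorithm:LocalRepair}.
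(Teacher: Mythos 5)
Your proposal is correct and follows essentially the same route as the paper: the identity $q^{c-1}-1=\qbin{c}{1}{q}-\qbin{c-1}{1}{q}-1$ is exactly the paper's counting step showing that the surviving representatives in each $V_i$ cannot all lie in a $(c-1)$-subspace of $\Span{A_i}$, and both arguments then combine the direct-sum decomposition $\Span{A_1}\oplus\cdots\oplus\Span{A_{b/c}}=\bF_q^b$ with Lemma~\ref{lemma:LocalRepair} to get resilience and locality~$c$. No gaps.
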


\begin{proof}
Since 
\begin{eqnarray*}
q^{c-1}-1=\qbin{c}{1}{q}-\qbin{c-1}{1}{q}-1<\qbin{c}{1}{q}-\qbin{c-1}{1}{q},
\end{eqnarray*}
it follows that after any set of at most $q^{c-1}-1$ node failures, the set of remaining active nodes in any $V_i$ is not contained in any $(c-1)$-subspace of $\left<A_i\right>$. Therefore, any $V_i$ still contains a basis for $\left<A_i\right>$. Since $\left<A_1\right>\oplus\cdots\oplus\left<A_{b/c}\right>=\bF_q^b$, it follows that $V$ is $(q^{c-1}-1)$-resilient spanning set. 

Let $v_j$ be a failed node and let $V_t$ be the set containing it. We have to prove that $v_j$ is repairable using at most~$c$ other nodes in the presence of at most~$q^{c-1}-1$ failures. We have shown that after $q^{c-1}-1$ failures, the remaining active nodes in any given $V_i$ contain a basis of $\left<A_i\right>$. Let $\{u_1,\ldots,u_c\}\subseteq\Span{A_t}$ be such a basis in $V_t$. It follows that $v_j \in \Span{u_1,\ldots,u_c}$, and hence $v_j$ is repairable by accessing at most~$c$ nodes by Lemma~\ref{lemma:LocalRepair}.
\end{proof}

This construction requires $\frac{b}{c}\cdot\qbin{c}{1}{q}$ nodes and allows locality of $c$ in the presence of up to $q^{c-1}-1$ failures. For simple comparison, the trivial replication code with $\frac{b}{c}\cdot\qbin{c}{1}{q}$ nodes allows locality of 1 in the presence of up to $\frac{1}{c}\cdot\qbin{c}{1}{q}-1$ failures. We note that 
\begin{eqnarray*}
\frac{q^{c-1}-1}{\frac{1}{c}\cdot\qbin{c}{1}{q}-1}\overset{q\to\infty}{\longrightarrow}c,
\end{eqnarray*}
and in particular for $c=2$,
\begin{eqnarray*}
\frac{q^{2-1}-1}{\frac{1}{2}\cdot\qbin{2}{1}{q}-1}=2.
\end{eqnarray*}
Therefore, this code outperforms the trivial one by approximately a factor of $c$ for large field size, while providing low locality. In particular, a minimal locality of 2 is achievable for any $q$.
\iffull
\section{Previous Work}\label{section:PreviousWork}
Due to the linear nature of our code, it may be seen as a linear code expanding a message of length $B$ to a codeword of length $n\cdot(b-1)$, having $n$ blocks of $b-1$ symbols. From this viewpoint, it is evident that $B$ symbols are necessary for reconstruction of $x$. Given a proper assignment of vectors to nodes, each block in this linear code may be repaired by contacting between $b-1$ and $b$ symbols. However, this viewpoint seems unnatural since the nodes store subsets of symbols from the codeword rather than individual symbols.

The code which was used as an inspiration to this paper is found in \cite[Example 3.2]{StorageCodes}. It uses the subspace interpretation mentioned in Section \ref{section:introduction}. 

\begin{construction}\label{construction:hollmann}\cite[Example 3.2]{StorageCodes} Let $x\in \bF_q^B$, where $B={b \choose 2}$ for some $b\in \bN$. Identify the coordinates of $\bF_q^B$ with 2-subsets of $[b]$, and define the following set of subspaces
\[
U_i = \Span{e_{\{i,j\}}}_{j\ne i}.
\]
If the subspaces $U_i,i\in[b]$ are assigned to nodes, it is possible to store $x$ in $b$ nodes and support a single node failure. Reconstruction is possible by communicating all data from $b-1$ nodes.
\end{construction}

Notice that by replication one may use this construction to repair any $t-1$ node failures by using $tb$ nodes. Our work has the same storage and same repair bandwidth as Construction \ref{construction:hollmann}, but it posses several advantages over it, such as repairing in the presence of a larger number of simultaneous failures and good locality. In particular, Example \ref{example:Justesen} in our paper uses $O(b)$ storage nodes while being able to repair $\delta b$ simultaneous node failures for some constant $\delta$.

Besides \cite{StorageCodes}, equidistant subspace codes were also observed to be useful in \cite{SomeConstructionsofStorageCodes}. However, no general construction was made, and repair/reconstruction algorithms were not discussed. Trivial equidistant codes were used for a DSS in \cite{SRCforDSSAProjective}. This code is also known as a \textit{spread} \cite{ErrorCorrectingCodesInProjectiveSpace}, that is, all cyclic shifts\footnote{A cyclic shift of a subfield $\bF_{q^k}$ is a set of the form $\{\gamma v~\vert~v\in\bF_{q^k}\}$ for some $\gamma\in\bF_{q^n}\setminus\{0\}$.} of some subfield of $\bF_{q^n}$. Although a spread provides a good locality, it is not clear how to use it in a DSS whose number of nodes is small.

As mentioned before, our code is a nearly MBR code. A simple general construction of MBR codes for all feasible parameters was given in \cite{OptimalERcodesforDSS}, requiring a field whose size is at least the number of participant nodes. It is also possible to employ MDS codes and their variants \cite{ExactRepairMDS,SRCforDSS,LongMDSCodes,AccessVSBandwidth,
RepairOptimalErasure,InterferenceAlignment}. An exact comparison between this paper and MDS based DSS codes is often hard to make since our code does not support repair and reconstruction from an arbitrary set of nodes. However, our code suggests an advantage over these codes since it may be applied over any field, where MDS codes are known to exists only in large fields. Rank-metric codes, which are closely related to subspace codes, were employed in \cite{ErrorResilienceinDSSNatalia,
OptimalLocallyRepairableCodesViaRankMetricCodes} as outer codes to MDS codes in order to achieve locality. These constructions also require large field size.
\fi
\bibliographystyle{ieeetr}
\bibliography{../../Bibliography}

\section*{Appendix A}
Two constructions of a matrix satisfying the requirements of Definition \ref{definition:GoodMatrix} are given, Construction \ref{construction:evenb} for even $b$ and Construction \ref{construction:oddb} for odd~$b$. It is easily verified that these two constructions satisfy the requirement of Definition \ref{definition:GoodMatrix}.
\begin{construction}\label{construction:evenb}
Let $b$ be an even integer. Define $N\in\bF_2^{b\times b}$ as follows. For all $i\in [b]$ let $N_{i,b}=0$ and for all $i\in[b-1]$ let $N_{b,i}=1$. The remaining $(b-1)\times(b-1)$ upper left submatrix is defined as follows. The first row is the $b-1$ bit vector $0^{b/2}1^{b/2-1}$; that is, $\frac{b}{2}$ zeros followed by $\frac{b}{2}-1$ ones. The rest of the rows are all cyclic shifts of it (see Example \ref{example:GoodMatrix} for the case $b=6$).
\end{construction}
\iffull
\begin{lemma}
The matrix given by Construction \ref{construction:evenb} satisfies the requirements of Definition \ref{definition:GoodMatrix}.
\end{lemma}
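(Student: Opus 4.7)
The plan is to verify each of the six conditions of Definition \ref{definition:GoodMatrix} in turn; conditions (1), (2) and (6) are either built into the construction or vacuous for even $b$, so the substantive work lies in (3), (4), and (5), all of which concern the upper-left $(b-1)\times(b-1)$ circulant block $C$ whose first row is $r=0^{b/2}1^{b/2-1}$. First I would record that $N_{b,b}=0$ since condition (2) only sets $N_{b,j}=1$ for $j\in[b-1]$, which together with the definition of the last row and column handles (1) and (2). For (3), I would use the elementary fact that every diagonal entry of a circulant equals its first-row/first-column entry, and $r_1=0$, so $N_{i,i}=0$ for all $i\in[b-1]$; combined with $N_{b,b}=0$ this finishes condition (3).

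For condition (4) I would split on whether $\{i,j\}$ meets $\{b\}$. The case where one of $i,j$ equals $b$ is immediate since $N_{b,k}=1$ and $N_{k,b}=0$ for $k\in[b-1]$. For $i\ne j$ in $[b-1]$, the circulant formula gives $N_{i,j}=r_{p}$ and $N_{j,i}=r_{p'}$ where $p=((j-i)\bmod(b-1))+1$ and $p'=((i-j)\bmod(b-1))+1$. Since $i\ne j$ and $|j-i|<b-1$, both $(j-i)\bmod(b-1)$ and $(i-j)\bmod(b-1)$ are nonzero and sum to $b-1$, so $p+p'=b+1$ with $p,p'\in\{2,\ldots,b-1\}$. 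The shape of $r$ makes $r_p=0$ iff $p\le b/2$, and the reflection $p'=b+1-p$ sends $\{2,\ldots,b/2\}$ bijectively to $\{b/2+1,\ldots,b-1\}$ and vice versa, forcing $r_p\ne r_{p'}$. I expect this index-juggling to be the main obstacle, since a slip in the modular arithmetic would invalidate the pairing.

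Finally, for condition (5) I would observe that every column of $C$ indexed by $j\in[b-1]$ is a permutation of $r$: as $i$ ranges over $[b-1]$, $(j-i)\bmod(b-1)$ ranges over $\{0,1,\ldots,b-2\}$, so the entries $r_{((j-i)\bmod(b-1))+1}$ exhaust $r$. Hence every column of $C$ has Hamming weight $b/2-1$, and adding the single $1$ contributed by $N_{b,i}=1$ in the bottom row raises the column weight of $N$ to $b/2$ for each $i\in[b-1]$, as required; the last column is the zero column by (1), consistent with its exclusion from (5). This completes the verification of all relevant items of Definition \ref{definition:GoodMatrix}.
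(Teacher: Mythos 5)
Your proof is correct, and for the substantive condition (4) it takes a genuinely different route from the paper. The paper argues pictorially: it looks at the upper-left circulant block $N'$ and identifies the $\left(\frac{b}{2}-1\right)\times\left(\frac{b}{2}-1\right)$ corner blocks (the top-right corner, which is upper triangular with 1's, carries all the above-diagonal 1's of $N'$, and the bottom-left corner carries all the below-diagonal 0's), from which $N_{i,j}\ne N_{j,i}$ is read off by inspecting where a position and its transpose land. You instead write the explicit circulant formula $N_{i,j}=r_{((j-i)\bmod (b-1))+1}$ and exploit the reflection $p\mapsto b+1-p$, which swaps the zero-support $\{2,\ldots,b/2\}$ of the first row with its one-support $\{b/2+1,\ldots,b-1\}$; this gives a uniform modular-arithmetic verification that also yields condition (3) (diagonal $=r_1=0$) and condition (5) (each column of the circulant is a permutation of $r$, so weight $\frac{b}{2}-1$ plus the bottom-row 1) from the same formula. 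Your treatment of (5) coincides with the paper's ("all cyclic shifts appear, so each column contains every 1 of the first row exactly once"). One small point worth making explicit in either approach: condition (4) genuinely depends on the rows being the \emph{consecutive right} cyclic shifts, i.e.\ row $i$ is the shift of the first row by $i-1$ positions to the right (an arbitrary ordering of the shifts would break antisymmetry, while (5) only needs that all shifts appear); your index formula encodes exactly this convention, which is the one shown in Example \ref{example:GoodMatrix}, so there is no gap, but the dependence deserves a sentence.
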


\begin{proof}
Conditions \ref{definition:GoodMatrixZeroColumn},\ref{definition:GoodMatrixOneRow}, and \ref{definition:GoodMatrixZeroDiagonal} obviously hold. We prove conditions \ref{definition:GoodMatrixAllElements} and \ref{definition:GoodMatrixEvenb}.

To prove Condition \ref{definition:GoodMatrixAllElements}, let $N'$ be the upper left $(b-1)\times(b-1)$ submatrix of $N$ and let $N''$ be the $(\frac{b}{2}-1)\times (\frac{b}{2}-1)$ upper right submatrix of $N'$. Notice that $N''$ is an upper triangular matrix with 1 in entry $i,j$ for every $j\ge i$. In addition, $N'$ has 0 in all entries above its main diagonal except those in $N'$. Similarly, Let $N'''$ be the $(\frac{b}{2}-1)\times (\frac{b}{2}-1)$ bottom left submatrix of $N'$. It may easily be seen to have 0 in entry $i,j$ for every $i\le j$, and $N'$ has 1 in all entries below the main diagonal except the 0 entries of $N'''$.

To prove Condition $\ref{definition:GoodMatrixEvenb}$ observe that since all cyclic shifts of the first row appear, each column contains every 1 from the first row exactly once, and has an additional 1 in the bottom entry. Hence, the Hamming weight of any column $i$ where $i\in[b-1]$ is $(\frac{b}{2}-1)+1=\frac{b}{2}$.
\end{proof}
\fi
\begin{construction}\label{construction:oddb}
Let $b$ be an odd integer. Define $N\in \bF_2^{b\times b}$ as follows. For all $i\in [b]$ let ${N_{i,b}=0}$ and for all $i\in[b-1]$ let $N_{b,i}=1$. The remaining $(b-1)\times(b-1)$ upper left submatrix is defined as follows. The first row is the $b-1$ bit vector $0^{(b+1)/2}1^{(b-3)/2}$, that is, $\frac{b+1}{2}$ zeros followed by $\frac{b-3}{2}$ ones. The rest of the rows are all cyclic shifts of it. In addition, set the sub diagonal entries $(1,\frac{b-1}{2}+1),(2,\frac{b-1}{2}+2),\ldots,(\frac{b-1}{2},b-1)$ to 1 (see Example \ref{example:GoodMatrix} for the case $b=5$).
\end{construction}
\iffull
\begin{lemma}
The matrix given by Construction \ref{construction:oddb} satisfies the requirements of Definition \ref{definition:GoodMatrix}.
\end{lemma}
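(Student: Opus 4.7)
The proof will proceed by verifying conditions (1)--(6) of Definition \ref{definition:GoodMatrix} one by one. Conditions (1), (2), and (5) are immediate from the construction: the first two follow verbatim from how the last row and column of $N$ are defined, and (5) is vacuous for odd $b$. Condition (3) is also short: the cyclic block has a zero at the start of its generating row $0^{(b+1)/2}1^{(b-3)/2}$, so zeros align along its diagonal, and the super-diagonal corrections are placed at offset $(b-1)/2 \ne 0$, so they do not touch the diagonal; the final entry $N_{b,b}$ is zero by (1).

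The main work lies in conditions (4) and (6). For (4), I would parameterize entries of the cyclic $(b-1)\times(b-1)$ block $C$ by the offset $a \triangleq (j-i) \bmod (b-1)$: the generating row forces $C_{i,j}=1$ precisely when $a \in \{(b+1)/2,\ldots,b-2\}$. Since swapping $i$ and $j$ sends $a$ to $(b-1)-a$, a short case analysis shows that $(C_{i,j},C_{j,i})$ is already asymmetric whenever $a \ne (b-1)/2$, and is symmetrically zero exactly when $a=(b-1)/2$. The super-diagonal list $(k,(b-1)/2+k)$ for $k=1,\ldots,(b-1)/2$ is designed to deposit a $1$ at each upper-triangular position with $a=(b-1)/2$, breaking the tie in the problematic case without disturbing any previously asymmetric pair (since those positions were $0$ in $C$ to begin with). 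Pairs involving row or column $b$ are handled directly by (1) and (2).

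For condition (6), I plan to tabulate column weights. The cyclic block contributes weight $(b-3)/2$ to every one of the first $b-1$ columns; the super-diagonal $1$s add exactly one more to each column $j \in [(b+1)/2,b-1]$; and the all-ones row $b$ adds one more to every column $j \in [1,b-1]$. This yields column weight $(b-1)/2$ for $j \in [1,(b-1)/2]$, weight $(b+1)/2$ for $j \in [(b+1)/2,b-1]$, and zero for $j=b$. Summing gives total weight $\frac{b-1}{2}\cdot\frac{b-1}{2}+\frac{b-1}{2}\cdot\frac{b+1}{2}=\frac{b(b-1)}{2}={b \choose 2}$, establishing both halves of (6). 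The main obstacle, modest as it is, lies in (4): one must verify that the index set $\{(k,(b-1)/2+k):k=1,\ldots,(b-1)/2\}$ is in bijection with the upper-triangular pairs $(i,j) \in [b-1]^2$ satisfying $j-i=(b-1)/2$, so that every problematic pair is corrected exactly once and no correction spills onto the lower-triangular side.
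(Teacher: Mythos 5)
Your argument is correct, and it is worth noting that the paper itself does not actually supply a proof of this lemma: its proof environment only records that conditions (1)--(3) hold obviously and leaves the arguments for (4) and (6) as empty placeholders deferred to the full version, so your proposal fills in precisely the missing content. Your treatment of (4) via the circulant offset $a=(j-i)\bmod (b-1)$ is sound: taking row $i$ of your block $C$ to be the $(i-1)$-fold right cyclic shift of $0^{(b+1)/2}1^{(b-3)/2}$ (the reading forced by the paper's $b=5$ example), one gets $C_{i,j}=1$ exactly when $a\in\{\frac{b+1}{2},\ldots,b-2\}$ and $C_{j,i}=1$ exactly when $a\in\{1,\ldots,\frac{b-3}{2}\}$, so the only symmetric pairs are the all-zero ones at $a=\frac{b-1}{2}$, and these are exactly the upper-triangular positions $(k,k+\frac{b-1}{2})$, $k=1,\ldots,\frac{b-1}{2}$, that the construction sets to $1$ (the paper's phrase ``sub diagonal'' notwithstanding, these lie above the main diagonal, as you say, and were zero beforehand, so no asymmetric pair is disturbed). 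Your weight count for (6) is likewise correct: $\frac{b-3}{2}$ per column from the circulant, one more in each of columns $\frac{b+1}{2},\ldots,b-1$ from the corrective diagonal, and one more in each of columns $1,\ldots,b-1$ from the all-ones last row, giving weight $\frac{b-1}{2}$ on the first $\frac{b-1}{2}$ columns and $\frac{b+1}{2}$ on the next $\frac{b-1}{2}$ columns, for a total of $\frac{b-1}{2}\cdot b=\binom{b}{2}$. For comparison, the paper's written-out proof of the even-$b$ analogue (Construction \ref{construction:evenb}) establishes the antisymmetry condition by decomposing the circulant into triangular corner submatrices rather than by offsets; your parameterization is the more transparent route for the odd case, since it isolates the single problematic offset $\frac{b-1}{2}$ that the extra diagonal of ones is designed to repair.
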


\begin{proof}
Conditions \ref{definition:GoodMatrixZeroColumn},\ref{definition:GoodMatrixOneRow}, and \ref{definition:GoodMatrixZeroDiagonal} obviously hold. We prove conditions \ref{definition:GoodMatrixAllElements} and \ref{definition:GoodMatrixOddb}.

To prove Condition \ref{definition:GoodMatrixAllElements},

To prove Condition \ref{definition:GoodMatrixOddb},
\end{proof}
\fi
\end{document}